%% file: main8.tex
\PassOptionsToPackage{unicode}{hyperref}
\PassOptionsToPackage{hyphens}{url}
\PassOptionsToPackage{dvipsnames,svgnames,x11names}{xcolor}
\documentclass[
  12pt]{article}
\usepackage{hyphenat}

\usepackage{amsmath,amssymb}
\usepackage{microtype}
\DeclareMathOperator{\Var}{Var}
\usepackage{mathtools}
\usepackage[shortlabels]{enumitem}
\newcommand{\E}{\mathbb{E}}
\newcommand{\CV}{\mathrm{CV}}

\newcommand{\Hone}{\mathcal{H}_1}
\newcommand{\Htwo}{\mathcal{H}_2}

\usepackage{iftex}
\ifPDFTeX
  \usepackage[T1]{fontenc}
  \usepackage[utf8]{inputenc}
  \usepackage{textcomp} 
\else 
  \usepackage{unicode-math}
  \defaultfontfeatures{Scale=MatchLowercase}
  \defaultfontfeatures[\rmfamily]{Ligatures=TeX,Scale=1}
\fi
\usepackage{lmodern}
\usepackage{amsthm}

\ifPDFTeX\else  
\fi
\IfFileExists{upquote.sty}{\usepackage{upquote}}{}
\IfFileExists{microtype.sty}{
  \usepackage[]{microtype}
  \UseMicrotypeSet[protrusion]{basicmath} 
}{}
\makeatletter
\@ifundefined{KOMAClassName}{
  \IfFileExists{parskip.sty}{%
    \usepackage{parskip}
  }{
    \setlength{\parindent}{0pt}
    \setlength{\parskip}{6pt plus 2pt minus 1pt}}
}{
  \KOMAoptions{parskip=half}}
\makeatother
\usepackage{xcolor}
\makeatletter
\ifx\paragraph\undefined\else
  \let\oldparagraph\paragraph
  \renewcommand{\paragraph}{
    \@ifstar
      \xxxParagraphStar
      \xxxParagraphNoStar
  }
  \newcommand{\xxxParagraphStar}[1]{\oldparagraph*{#1}\mbox{}}
  \newcommand{\xxxParagraphNoStar}[1]{\oldparagraph{#1}\mbox{}}
\fi
\ifx\subparagraph\undefined\else
  \let\oldsubparagraph\subparagraph
  \renewcommand{\subparagraph}{
    \@ifstar
      \xxxSubParagraphStar
      \xxxSubParagraphNoStar
  }
  \newcommand{\xxxSubParagraphStar}[1]{\oldsubparagraph*{#1}\mbox{}}
  \newcommand{\xxxSubParagraphNoStar}[1]{\oldsubparagraph{#1}\mbox{}}
\fi
\makeatother

\newtheoremstyle{break}
  {\dimexpr\baselineskip+\parskip\relax}
  {\dimexpr\baselineskip+\parskip\relax}
  {\itshape}
  {}
  {\bfseries}
  {.}
  {\newline}
  {}
\theoremstyle{break}
\newtheorem{theorem}{Theorem}[section]
\newtheorem{lemma}[theorem]{Lemma}
\newtheorem{corollary}[theorem]{Corollary}
\newtheorem{proposition}[theorem]{Proposition}
\newtheorem{remark}[theorem]{Remark}

\newcounter{algobox}
\renewcommand{\thealgobox}{\arabic{algobox}}

\usepackage{longtable,booktabs,array}
\usepackage{calc} 
\usepackage{etoolbox}
\makeatletter
\patchcmd\longtable{\par}{\if@noskipsec\mbox{}\fi\par}{}{}
\makeatother
\IfFileExists{footnotehyper.sty}{\usepackage{footnotehyper}}{\usepackage{footnote}}
\makesavenoteenv{longtable}
\usepackage{graphicx}
\makeatletter
\def\maxwidth{\ifdim\Gin@nat@width>\linewidth\linewidth\else\Gin@nat@width\fi}
\def\maxheight{\ifdim\Gin@nat@height>\textheight\textheight\else\Gin@nat@height\fi}
\makeatother
\setkeys{Gin}{width=\maxwidth,height=\maxheight,keepaspectratio}
\makeatletter
\def\fps@figure{htbp}
\makeatother

\makeatletter
\def\fps@figure{htbp}
\makeatother

\usepackage{placeins} 

\makeatletter
\@ifpackageloaded{caption}{}{\usepackage{caption}}
\AtBeginDocument{%
\ifdefined\contentsname
  \renewcommand*\contentsname{Table of contents}
\else
  \newcommand\contentsname{Table of contents}
\fi
\ifdefined\listfigurename
  \renewcommand*\listfigurename{List of Figures}
\else
  \newcommand\listfigurename{List of Figures}
\fi
\ifdefined\listtablename
  \renewcommand*\listtablename{List of Tables}
\else
  \newcommand\listtablename{List of Tables}
\fi
\ifdefined\figurename
  \renewcommand*\figurename{Figure}
\else
  \newcommand\figurename{Figure}
\fi
\ifdefined\tablename
  \renewcommand*\tablename{Table}
\else
  \newcommand\tablename{Table}
\fi
}
\@ifpackageloaded{float}{}{\usepackage{float}}
\floatstyle{ruled}
\@ifundefined{c@chapter}{\newfloat{codelisting}{h}{lop}}{\newfloat{codelisting}{h}{lop}[chapter]}
\floatname{codelisting}{Listing}

\makeatother
\makeatletter
\@ifpackageloaded{caption}{}{\usepackage{caption}}
\@ifpackageloaded{subcaption}{}{\usepackage{subcaption}}
\makeatother

\ifLuaTeX
  \usepackage{selnolig}  
\fi
\usepackage[]{natbib}
\usepackage{bookmark}
\usepackage{comment}

\IfFileExists{xurl.sty}{\usepackage{xurl}}{} 
\hypersetup{
  pdftitle={Optimality of the Half‑Order Exponent in the Turing–Good Identities for Bayes Factors},
  pdfauthor={Kensuke Okada},
  pdfkeywords={3 to 6 keywords, that do not appear in the title},
  colorlinks=true,
  linkcolor={blue},
  filecolor={Maroon},
  citecolor={Blue},
  urlcolor={Blue},
  pdfcreator={LaTeX via pandoc}}

\newcommand{\anon}{1}
\providecommand{\half}{\frac{1}{2}}

\begin{document}

\def\spacingset#1{\renewcommand{\baselinestretch}%
{#1}\small\normalsize} \spacingset{1}


\if1\anon
{
  \title{\bf Optimality of the Half‑Order Exponent in the Turing–Good Identities for Bayes Factors}
  \author{Kensuke Okada\hspace{.2cm}\\
    Graduate School of Education, The University of Tokyo}
  \maketitle
} \fi

\if0\anon
{
  \bigskip
  \bigskip
  \bigskip
  \begin{center}
    {\LARGE\bf Title}
\end{center}
  \medskip
} \fi

\bigskip
\begin{abstract}
Bayes factors are widely computed by Monte Carlo, yet heavy-tailed sampling distributions can make numerical validation unreliable. The Turing--Good identities provide exact moment equalities for powers of a Bayes factor (a density ratio). When these identities are used as Good-check diagnostics, the power choice becomes a statistical design parameter. We develop a nonasymptotic variance theory for Monte Carlo evaluation of the identities and show that the half-order (square-root) power is uniquely minimax-stable: it equalizes variability across the two model orientations and is the only choice that guarantees finite second moments in a distribution-free worst-case sense over all mutually absolutely continuous model pairs. This yields a balanced two-sample half-order diagnostic that is symmetric in model labeling and has a uniform variance bound at fixed computational budget; in small-overlap regimes it is guaranteed to be no less efficient than the standard one-sided Turing check. Simulations for binomial Bayes factor workflows illustrate stable finite-sample behavior and sensitivity to simulator--evaluator mismatches. We further connect the half-order overlap viewpoint to stable primitives for normalizing-constant ratios and importance-sampling degeneracy summaries.
\end{abstract}

\noindent\textit{Keywords:} marginal likelihood; Bayes factor; Hellinger affinity; R\'enyi divergence; Monte Carlo diagnostics; importance sampling; bridge sampling

\newpage
\spacingset{1.8} 

\section{Introduction}
\label{sec:intro}

Bayes factors provide a canonical Bayesian measure of relative evidence between two statistical models.
Given rival hypotheses $\mathcal{H}_1$ and $\mathcal{H}_2$ with marginal likelihoods (prior predictives)
$p_1(x)$ and $p_2(x)$, the Bayes factor
\[
B(x)\;:=\;\frac{p_1(x)}{p_2(x)}
\]
updates prior odds into posterior odds via multiplication \citep{KassRaftery1995}.
As \citet[][p. 558]{HeckEtAl2023} note, ``The last 25 years have shown a steady increase in attention for the Bayes factor as a tool for hypothesis evaluation and model selection.'' 
 Bayes factors are routinely used as decision statistics
(e.g., threshold rules that declare $\mathcal{H}_1$ when $B(x)$ is large), which raises both computational and
frequentist-calibration questions \citep{Dickey1971,KassRaftery1995}. Bayes factors are used not only for hypothesis tests but also to test alternative substantive theories against each other \citep{LeeWagenmakers2013}.
While asymptotic approximations such as BIC connect Bayes factors to log-likelihood differences in regular models
\citep{schwarz1978}, modern applications often hinge on reliable \emph{finite-sample} computations of marginal likelihoods. 

Computing marginal likelihoods---and more generally, estimating ratios of normalizing constants---is notoriously delicate.
Many practical algorithms reduce these problems to Monte Carlo evaluation of expectations of a density ratio
(the Radon--Nikodym derivative), using importance sampling, bridge/path sampling, and related free-energy estimators
\citep{Bennett1976,MengWong1996,ShirtsChodera2008,Gronau2017}.
Such methods can be highly effective when the relevant distributions overlap substantially.
However, under tail mismatch the ratio can develop extreme weights that dominate Monte Carlo averages,
leading to severe variance inflation, highly skewed sampling distributions, and unstable standard-error assessments.

Accordingly, there is a growing emphasis on \emph{workflow-level} diagnostics and rigorous validation of Bayes factor
implementations \citep{SchadEtAl2023}.
A remarkably general lens on the sampling behavior of density ratios is provided by the classic Turing--Good moment identity \citep{Good1985,jacod2013limit}:
\[
\E_{\mathcal{H}_2}\!\left[B(X)^t\right]
\;=\;
\E_{\mathcal{H}_1}\!\left[B(X)^{t-1}\right].
\]
Recently, \citet{Sekulovski2024} advocated leveraging selected instances of this identity, specifically
$t=1$ (the ``Turing'' identity $\E_{\mathcal{H}_2}[B]=1$) and $t=2$
---as practical \emph{Good checks}
for validating numerical Bayes factor computations.

When these exact population identities are repurposed as Monte Carlo diagnostics, the exponent $t$ becomes a critical
statistical \emph{design parameter}.
The identities fix means, but numerical reliability is governed by second moments.
Because $B(X)$ is often highly skewed and heavy-tailed, natural integer-order choices like the $t=1$ or $t=2$
checks can exhibit enormous or even infinite variance for perfectly valid, mutually absolutely continuous model pairs.
This exposes a methodological paradox: the very regimes where validation is most needed (weak overlap and heavy tails)
are exactly those where the diagnostic statistics themselves can become ill-posed in finite Monte Carlo runs.

This paper resolves the design paradox by studying exponent selection from an explicitly computation-aware perspective.
We ask:
\emph{which exponent yields a moment identity that can be verified stably, in a distribution-free sense?}

We show that the half-order exponent $t=\tfrac12$ 
is mathematically distinguished.
Among the continuum of valid identities, the half-order is the \emph{unique} exponent that equalizes the natural variances
under $\mathcal{H}_1$ and $\mathcal{H}_2$, and it uniquely yields a distribution-free, finite second-moment guarantee for
the per-draw diagnostic building blocks across all mutually absolutely continuous pairs.

Our main contributions are threefold:

\begin{enumerate}
\item \textbf{Exact variance theory and unique minimax stability (Section~2).}
We derive exact nonasymptotic variance expressions for Monte Carlo evaluation of the expectationfrom either side of the
Turing--Good identity in terms of shifted overlaps.
At half order, $B^{1/2}$ under $\mathcal{H}_2$ and $B^{-1/2}$ under $\mathcal{H}_1$ have equal finite variance.
Conversely, for every $t\neq\tfrac12$ there exist mutually absolutely continuous pairs for which the worst-side variance
diverges.

\item \textbf{Robust Good checks at matched cost (Section~3).}
We propose a symmetric, two-sided half-order diagnostic that resolves the main bottlenecks of integer-moment Good checks.
Under a fixed budget of Bayes factor evaluations, the balanced two-sample half-order discrepancy has a bounded variance, uniformly over mutually absolutely continuous pairs.
We also provide an analytic matched-budget comparison showing that, in the small-overlap regime,
the half-order two-sided check is guaranteed to be no less efficient (in variance) than the one-sided $t=1$ Turing check.
Simulation studies illustrate stable finite-sample behavior and sensitivity to simulator--evaluator mismatches.

\item \textbf{Broader implications for ratio estimation and weight degeneracy
(Section~4).}
We show that the same half-order viewpoint extends beyond Good checks to core computational tasks:
a canonical geometric bridge identity for normalizing-constant ratios whose building blocks have distribution-free finite
second moments and stable overlap-guided summaries of importance-weight concentration that do not rely on fragile second moments
\end{enumerate}


Section~2 establishes the exact variance structure of the Turing--Good identities and proves the unique minimax stability
of the half-order exponent.
Section~3 develops the balanced two-sided half-order diagnostic, provides matched-budget variance comparisons, and presents
simulation studies.
Section~4 explores broader implications for normalizing-constant ratios and importance sampling diagnostics
.
Section~5 concludes with a discussion. 

\section{Turing--Good Identities, Variance Structure, and Half-Order Optimality}
\label{sec:TG-variance-halforder}

\subsection{Setup and basic properties}


Let $(\mathcal X,\mathcal A,\mu)$ be a measurable space with a $\sigma$-finite dominating measure $\mu$.
We consider two competing models (or hypotheses) $\mathcal{H}_1$ and $\mathcal{H}_2$ with \emph{marginal likelihood}
(prior predictive) densities $p_1$ and $p_2$ with respect to~$\mu$.
Let $P_j$ be the induced probability measures, $P_j(A):=\int_A p_j\,d\mu$.
Throughout we assume \emph{mutual absolute continuity} $P_1\ll P_2$ and $P_2\ll P_1$; equivalently,
\begin{equation}
p_1(x)=0 \iff p_2(x)=0 \qquad \text{for $\mu$-a.e.\ $x\in\mathcal X$}.
\label{eq:mutual-ac}
\end{equation}
so that the Bayes factor is well-defined $P_2$-a.s.\ (and $P_1$-a.s.):
\[
B(x):=\frac{p_1(x)}{p_2(x)}\quad\text{for }p_2(x)>0,
\]
with an arbitrary (e.g., $1$) value on $\{p_2(x)=0\}$, which is $P_1$- and $P_2$-null.
For readability, we write $\int f(x)\,dx$ as shorthand for $\int f\,d\mu$.

For $t\in\mathbb R$, define
\begin{equation}
I(t)\;:=\;\int p_1(x)^{\,t}\,p_2(x)^{\,1-t}\,dx,
\label{eq:def-It}
\end{equation}
the Hellinger integral of order~$t$ \citep[p.~228]{jacod2013limit}.

We denote the effective domain (which depends on the pair $(p_1,p_2)$) by
\[
D = D(p_1,p_2)\;:=\;\{t\in\mathbb R:\ I(t)<\infty\}.
\]
Unless stated otherwise, whenever we write $\E_{\mathcal{H}_2}[B^t]$ or $\E_{\mathcal{H}_1}[B^{t-1}]$ we implicitly assume $t\in D$.
Note that $t\in D$ does not in general imply $2t\in D$ or $2t-1\in D$; these additional conditions 
govern the finiteness of the second moments (and hence whether the variances in Lemma \ref{lem:var-identities} are finite).
For $t\in (0,\infty)\cap D\setminus\{1\}$, the quantity
\[
  D_t(p_1\|p_2) := \frac{1}{t-1}\log I(t)
\]
is the order-$t$ Rényi divergence from $p_1$ to $p_2$.
At $t=1$, interpret by continuity:
\[
  \lim_{t\to 1} \frac{1}{t-1}\log I(t) = D_{\mathrm{KL}}(p_1\|p_2).
\]

We write $\mathbb{P}_{\mathcal H_j}$ and $\E_{\mathcal H_j}$ for probability/expectation under the law with density $p_j$
(equivalently, $\mathbb{P}_{p_j}$ and $\E_{p_j}$).

\begin{theorem}[Turing--Good moment identity; \citealt{Good1985}]
\label{thm:TG}
For any $t\in D$,
\[
  \E_{\mathcal{H}_1}\!\left[B^{\,t-1}\right]
  \ =\
  \E_{\mathcal{H}_2}\!\left[B^{\,t}\right]
  \ =\ I(t).
\]
\end{theorem}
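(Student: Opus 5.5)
The plan is a direct change-of-measure computation: write each expectation as an integral against $\mu$ and check that both reduce to the Hellinger integral $I(t)$ of \eqref{eq:def-It}. No earlier result is needed beyond the setup. The only care required is with the null sets where the densities vanish and with the convention $0^{s}$ when $s$ may be negative.

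First I would fix the support $S:=\{x: p_1(x)>0\}$. By mutual absolute continuity \eqref{eq:mutual-ac}, $S=\{x: p_2(x)>0\}$ up to a $\mu$-null set. The definition of $I(t)$ should then be read as $I(t)=\int_S p_1^{t}p_2^{1-t}\,d\mu$. This is the natural interpretation, since for $t\le 0$ or $t\ge 1$ the integrand off $S$ would involve $0^{\text{negative}}$, and off $S$ it is $0\cdot$(irrelevant) under the convention that the integral is over the common support. I would state this convention explicitly, since it is what makes $I(t)$ well-defined for all real $t$. On $S$ we have $0<B(x)=p_1(x)/p_2(x)<\infty$, so $B^{s}$ is finite and positive for every real $s$. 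Also $S^c$ is both $P_1$- and $P_2$-null, so the arbitrary value of $B$ there does not matter.

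Next I would compute each side. Since $B^{t}\ge 0$, Tonelli-type reasoning for nonnegative integrands applies without any integrability assumption:
\[
\E_{\mathcal H_2}[B^t]=\int_S \Bigl(\tfrac{p_1}{p_2}\Bigr)^{t} p_2\,d\mu=\int_S p_1^{t}p_2^{1-t}\,d\mu=I(t).
\]
Likewise,
\[
\E_{\mathcal H_1}[B^{t-1}]=\int_S \Bigl(\tfrac{p_1}{p_2}\Bigr)^{t-1}p_1\,d\mu=\int_S p_1^{t}p_2^{1-t}\,d\mu=I(t).
\]
These identities hold in $[0,\infty]$ for every $t\in\mathbb R$. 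The hypothesis $t\in D$ serves only to guarantee that the common value is finite, so the expectations are genuine finite moments.

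The main (minor) obstacle is therefore the measure-theoretic bookkeeping rather than any analysis. One must check that restricting to $S$ loses nothing under either $P_j$, and that the pointwise algebra $(p_1/p_2)^{t}p_2=p_1^{t}p_2^{1-t}$ is valid on $S$. That algebra is immediate there because both densities are strictly positive. I would close by noting that nonnegativity of the integrands makes the argument free of any cancellation or interchange-of-limits issues.
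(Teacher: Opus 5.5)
Your proposal is correct and is essentially the paper's proof: the same direct computation $\int (p_1/p_2)^{t}p_2\,d\mu=\int p_1^{t}p_2^{1-t}\,d\mu=\int (p_1/p_2)^{t-1}p_1\,d\mu$. The paper leaves implicit two points you make explicit: restricting to the common support $S$, and noting that nonnegativity makes the identities hold in $[0,\infty]$ for all real $t$, with $t\in D$ serving only to make the common value finite.
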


\begin{proof}
By definition of $B$,
\[
\E_{\mathcal{H}_1}\!\left[B^{\,t-1}\right]
=\int \Bigl(\frac{p_1}{p_2}\Bigr)^{t-1}p_1\,dx
=\int p_1^{\,t}p_2^{\,1-t}\,dx
=I(t),
\]
and similarly
\[
\E_{\mathcal{H}_2}\!\left[B^{\,t}\right]
=\int \Bigl(\frac{p_1}{p_2}\Bigr)^{t}p_2\,dx
=\int p_1^{\,t}p_2^{\,1-t}\,dx
=I(t).
\]
\end{proof}

\begin{corollary}[Turing's theorem]
\label{cor:Turing}
At $t=1$ in Theorem~\ref{thm:TG}, $\E_{\mathcal{H}_2}[B]=1$.
\end{corollary}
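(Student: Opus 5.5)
The plan is to specialize Theorem~\ref{thm:TG} to $t=1$. Two things must be checked: that $t=1$ always lies in the effective domain $D$, and that $I(1)$ equals $1$.

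First, evaluate the Hellinger integral \eqref{eq:def-It} at $t=1$. We get
\[
I(1)=\int p_1(x)\,p_2(x)^{0}\,dx .
\]
Here one must be careful about the convention $p_2^{0}$ on the set $\{p_2=0\}$. By mutual absolute continuity \eqref{eq:mutual-ac}, $p_1=0$ $\mu$-a.e.\ on that set, so the integrand vanishes there whatever convention is used for $0^0$. Hence
\[
I(1)=\int_{\{p_2>0\}}p_1\,d\mu=P_1(\{p_2>0\})=1,
\]
where the last equality holds because $\{p_2=0\}$ is $P_1$-null. In particular $I(1)<\infty$, so $1\in D$ for every admissible pair $(p_1,p_2)$, and no extra hypothesis is needed.

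With $1\in D$ established, Theorem~\ref{thm:TG} applies at $t=1$ and gives $\E_{\mathcal H_2}[B]=I(1)=1$. As a consistency check, the other side of the identity reads $\E_{\mathcal H_1}[B^{0}]=1$, which is trivially true.

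Alternatively, one can compute directly: $\E_{\mathcal H_2}[B]=\int_{\{p_2>0\}}(p_1/p_2)\,p_2\,d\mu=P_1(p_2>0)=1$.

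The only point requiring care is the null set $\{p_2=0\}$, and the $0^0$ convention there. Mutual absolute continuity disposes of both, so no step here is a genuine obstacle.
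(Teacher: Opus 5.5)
Your proposal is correct and follows the paper's route: the corollary is Theorem~\ref{thm:TG} evaluated at $t=1$, combined with $I(1)=\int p_1\,d\mu=1$, which the paper records as Lemma~\ref{lem:basic-bounds}(i). Your handling of the $0^0$ convention on the null set $\{p_2=0\}$ via mutual absolute continuity is a harmless refinement that the paper leaves implicit.
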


\begin{lemma}[Basic bounds and equality characterization]
\label{lem:basic-bounds}
\begin{enumerate}
(i) $I(0)=I(1)=1$.
(ii) For every $t\in[0,1]$, $I(t)\le 1$ (in particular, $[0,1]\subset D$).
(iii) For every $t\notin[0,1]$ with $t\in D$, $I(t)\ge 1$.
\end{enumerate}
If $t\notin\{0,1\}$, then equality in (ii) or (iii) holds if and only if $p_1(x)=p_2(x)$ $\mu$-a.e.
\end{lemma}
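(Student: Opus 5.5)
The plan is to write everything as an expectation under $\mathcal{H}_2$ via Theorem~\ref{thm:TG}, namely $I(t)=\E_{\mathcal{H}_2}[B^t]$, and then apply Jensen's inequality to the power map $y\mapsto y^t$ on $[0,\infty)$. This map is concave for $t\in[0,1]$ and convex for $t\notin[0,1]$.

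Part (i) is immediate from the definition. For $I(0)$ we get $\int p_2\,dx=1$, and for $I(1)$ we get $\int p_1\,dx=1$. Here mutual absolute continuity \eqref{eq:mutual-ac} is used to make sense of terms like $p_1^0$ on the null set where both densities vanish; equivalently, $I(0)=\E_{\mathcal{H}_2}[B^0]=1$ and $I(1)=\E_{\mathcal{H}_2}[B]=1$ (Corollary~\ref{cor:Turing}). In general, the integrand is taken to be $0$ where $p_1=p_2=0$.

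For (ii), take $t\in(0,1)$ and use concavity: $\E_{\mathcal{H}_2}[B^t]\le (\E_{\mathcal{H}_2}[B])^t=1$. Alternatively, Hölder with exponents $1/t$ and $1/(1-t)$ gives $\int p_1^t p_2^{1-t}\le (\int p_1)^t(\int p_2)^{1-t}=1$. The integrand is nonnegative, so $I(t)\in[0,1]$ and $[0,1]\subset D$.

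For (iii), there are two cases.
\begin{itemize}
\item If $t>1$, then $y^t$ is convex on $[0,\infty)$, so $I(t)=\E_{\mathcal{H}_2}[B^t]\ge(\E_{\mathcal{H}_2}B)^t=1$.
\item If $t<0$, I would instead write $I(t)=\E_{\mathcal{H}_1}[B^{t-1}]=\E_{\mathcal{H}_1}[(1/B)^{1-t}]$, where $1-t>1$. Since $\E_{\mathcal{H}_1}[1/B]=\int p_2\,dx=1$ by mutual absolute continuity, convexity gives $I(t)\ge 1$. One could also apply Jensen directly under $\mathcal{H}_2$, since $y^t$ is convex on $(0,\infty)$ for $t<0$ and $B>0$ $P_2$-a.s.
\end{itemize}
Mutual absolute continuity matters in this step: it guarantees $B\in(0,\infty)$ almost surely under both laws, so negative powers are finite a.s. and the expectation $\E_{\mathcal{H}_2}[B]=1$ is exact rather than $\le1$.

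The main obstacle is the equality characterization. For $t\notin\{0,1\}$, the function $y^t$ is strictly concave or strictly convex on $(0,\infty)$. The strict form of Jensen then says equality holds iff $B$ is $P_2$-a.s. constant (or $P_1$-a.s. constant in the $t<0$ route). Since $\E_{\mathcal{H}_2}B=1$, that constant must be $1$, so $p_1=p_2$ $P_2$-a.s. On $\{p_2=0\}$ we also have $p_1=0$ $\mu$-a.e. by \eqref{eq:mutual-ac}, so $p_1=p_2$ $\mu$-a.e. The converse is trivial, since $p_1=p_2$ gives $I(t)=1$.

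I would justify strict Jensen by the standard supporting-line argument at the mean. Take $\phi(y)=y^t$, and let $\ell(y)=\phi(1)+\phi'(1)(y-1)$ be the tangent line at $y=1$. For strictly convex $\phi$, $\phi(y)\ge \ell(y)$ with equality only at $y=1$. Equality of expectations then forces $\phi(B)=\ell(B)$ a.s., hence $B=1$ a.s. The concave case is identical with the inequalities reversed. When $I(t)=\infty$, equality cannot occur, so only $t\in D$ needs checking.
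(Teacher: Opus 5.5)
Your proof is correct and follows essentially the same route as the paper: Jensen's inequality for $y\mapsto y^t$ under $\mathcal{H}_2$, combined with $\E_{\mathcal{H}_2}[B]=1$, with strict convexity/concavity giving the equality characterization. The only cosmetic difference is in (ii), where the paper integrates the pointwise weighted AM--GM bound $p_1^tp_2^{1-t}\le t\,p_1+(1-t)\,p_2$; this is exactly your tangent-line inequality $B^t\le 1+t(B-1)$ multiplied by $p_2$, so the two arguments coincide, and yours makes the equality case in (ii) explicit where the paper leaves it implicit.
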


\begin{proof}
(i) follows from $I(0)=\int p_2\,dx=1$ and $I(1)=\int p_1\,dx=1$.
For (ii), if $0<t<1$ then the weighted AM--GM inequality yields
$p_1(x)^{t}p_2(x)^{1-t}\le t\,p_1(x)+(1-t)\,p_2(x)$ pointwise; integrating both sides gives $I(t)\le 1$.
For (iii), let $f_t(u)=u^t$ on $(0,\infty)$.
For $t<0$ or $t>1$, $f_t$ is strictly convex, so Jensen's inequality gives
\[
I(t)=\E_{\mathcal{H}_2}[B^{t}] \ \ge\ \{\E_{\mathcal{H}_2}[B]\}^{t} \ =\ 1,
\]
where we used Corollary~\ref{cor:Turing} in the last equality.
If $t\notin\{0,1\}$, equality in Jensen holds if and only if $B$ is $\mathcal{H}_2$-a.s.\ constant.
Since $\E_{\mathcal{H}_2}[B]=1$, this forces $B\equiv 1$ $\mathcal{H}_2$-a.s., and hence $p_1=p_2$ $\mu$-a.e.
\end{proof}

\paragraph*{Remark (Domain and finiteness on $[0,1]$).}
Lemma~\ref{lem:basic-bounds}(ii) implies that $[0,1]\subset D$ and that $I(t)\in[0,1]$ for all $t\in[0,1]$.
In particular, the half-order overlap $\rho=I(\half)$ is always well-defined and finite.
Under mutual absolute continuity~\eqref{eq:mutual-ac}, we also have $\rho>0$
(and $\rho=1$ if and only if $p_1=p_2$ $\mu$-a.e.).

A quantity that will play a central role is the \emph{half-order} overlap
\begin{equation}
\rho \;:=\; I(\half) \;=\;\int \sqrt{p_1(x)p_2(x)}\,dx \ \in\ (0,1],
\label{eq:rho-def}
\end{equation}
the Hellinger affinity (Bhattacharyya coefficient) between the two marginal likelihoods.

\subsection{Variance identities and a CGF viewpoint}

The Turing--Good moment identity fixes the means of the transformed Bayes factors
$B^{t-1}$ under $\mathcal{H}_1$ and $B^t$ under $\mathcal{H}_2$.
For exponent selection, the next natural object is their dispersion.

\paragraph*{Convention (Second moments and extended variances).}
We allow variances to take values in $[0,\infty]$, interpreting $\Var(\cdot)=+\infty$ when the corresponding second moment is infinite; equivalently,
 $\Var_{\mathcal{H}_2}(B^{t})<\infty$ exactly when $2t\in D$ and $\Var_{\mathcal{H}_1}(B^{t-1})<\infty$ exactly when $2t-1\in D$.

\begin{lemma}[Variance identities]
\label{lem:var-identities}
For any $t\in D$,
\begin{equation}
\Var_{\mathcal{H}_1}\!\bigl(B^{t-1}\bigr)=I(2t-1)-I(t)^2,
\qquad
\Var_{\mathcal{H}_2}\!\bigl(B^{t}\bigr)=I(2t)-I(t)^2,
\end{equation}
where the right-hand sides are interpreted in $[0,\infty]$ according to the preceding convention.
\label{eq:var-id}
\end{lemma}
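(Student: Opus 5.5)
The plan is to compute the second moments directly from the Turing--Good identity (Theorem~\ref{thm:TG}) applied at a shifted exponent, and then to subtract the squared means, which are already known to equal $I(t)$.

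\textbf{Step 1: second moments.} Fix $t\in D$. Under $\mathcal{H}_2$,
\[
\E_{\mathcal{H}_2}\!\bigl[(B^{t})^2\bigr]=\E_{\mathcal{H}_2}\!\bigl[B^{2t}\bigr]=\int p_1^{2t}p_2^{1-2t}\,dx=I(2t).
\]
This integral of a nonnegative function is always well defined in $[0,\infty]$, whether or not $2t\in D$, so the identity holds in the extended sense. Likewise, under $\mathcal{H}_1$,
\[
\E_{\mathcal{H}_1}\!\bigl[(B^{t-1})^2\bigr]=\int (p_1/p_2)^{2t-2}p_1\,dx=\int p_1^{2t-1}p_2^{2-2t}\,dx=I(2t-1).
\]
Here $p_2^{\,1-(2t-1)}=p_2^{2-2t}$, so this is exactly the Hellinger integral at order $2t-1$. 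Both computations are the same manipulation as in the proof of Theorem~\ref{thm:TG}, with $t$ replaced by $2t$ (respectively $2t-1$ on the $\mathcal{H}_1$ side). The null set $\{p_2=0\}$ contributes nothing, by mutual absolute continuity.

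\textbf{Step 2: means.} Because $t\in D$, Theorem~\ref{thm:TG} gives finite means $\E_{\mathcal{H}_2}[B^t]=\E_{\mathcal{H}_1}[B^{t-1}]=I(t)<\infty$. When the second moment is finite, the formula $\Var(Y)=\E[Y^2]-(\E Y)^2$ then yields the two displayed identities.

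\textbf{Step 3: the infinite case.} Suppose the second moment is $+\infty$, for instance $I(2t)=\infty$. The convention sets the variance to $+\infty$. The right-hand side is $\infty-I(t)^2=\infty$, since $I(t)$ is finite, so the identity again holds in $[0,\infty]$. The same argument covers $I(2t-1)=\infty$. This also confirms the stated equivalences: $\Var_{\mathcal{H}_2}(B^t)<\infty$ if and only if $2t\in D$, and $\Var_{\mathcal{H}_1}(B^{t-1})<\infty$ if and only if $2t-1\in D$.

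The only delicate point is this bookkeeping of finiteness: one must make sure that $\infty-\infty$ never arises. That is guaranteed by the standing hypothesis $t\in D$, which keeps the mean $I(t)$ finite. Otherwise the argument is routine.
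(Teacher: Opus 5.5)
Your proposal is correct and is essentially the paper's proof: it computes the second moments $\E_{\mathcal{H}_2}[B^{2t}]=I(2t)$ and $\E_{\mathcal{H}_1}[B^{2t-2}]=I(2t-1)$ in $[0,\infty]$ by the same manipulation as in Theorem~\ref{thm:TG}, then subtracts the finite squared mean $I(t)^2$. Your explicit check that $t\in D$ rules out an $\infty-\infty$ form spells out what the paper leaves implicit in its extended-variance convention.
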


\begin{proof}
Fix $t\in D$, so $I(t)<\infty$ and hence $\E_{\Htwo}[B^t]=\E_{\Hone}[B^{t-1}]=I(t)$.
For the $\Htwo$ side, using $\Var(X)=\E[X^2]-\{\E[X]\}^2$ with the convention that $\Var(X)=+\infty$
when $\E[X^2]=+\infty$,
\[
\Var_{\Htwo}(B^t)=\E_{\Htwo}[B^{2t}] - \{\E_{\Htwo}[B^t]\}^2.
\]
Moreover,
\[
\E_{\Htwo}[B^{2t}]
= \int \left(\frac{p_1}{p_2}\right)^{2t} p_2\,d\mu
= \int p_1^{2t} p_2^{1-2t}\,d\mu
= I(2t)\in[0,\infty].
\]
Therefore $\Var_{\Htwo}(B^t)=I(2t)-I(t)^2$ in $[0,\infty]$.
The $\Hone$ side is analogous:
\[
\E_{\Hone}[B^{2(t-1)}]
= \int \left(\frac{p_1}{p_2}\right)^{2(t-1)} p_1\,d\mu
= \int p_1^{2t-1} p_2^{2-2t}\,d\mu
= I(2t-1),
\]
hence $\Var_{\Hone}(B^{t-1})=I(2t-1)-I(t)^2$.
\end{proof}

\begin{corollary}[Half-order equalization]
\label{cor:half-order-eq}
At $t=\half$, the two variances coincide and are always finite:
\[
\Var_{\mathcal{H}_1}\!\bigl(B^{-1/2}\bigr)
=\Var_{\mathcal{H}_2}\!\bigl(B^{1/2}\bigr)
=1-\rho^2.
\]
\end{corollary}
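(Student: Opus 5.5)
The plan is to specialize Lemma~\ref{lem:var-identities} to $t=\half$ and then evaluate the resulting Hellinger integrals with Lemma~\ref{lem:basic-bounds}(i).

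First I check that the lemma applies. Its hypothesis is $t\in D$. By Lemma~\ref{lem:basic-bounds}(ii), $[0,1]\subset D$, so $\half\in D$, and by \eqref{eq:rho-def} we have $I(\half)=\rho$.

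Next I substitute $t=\half$ into the two variance formulas.

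On the $\mathcal{H}_1$ side, the exponent $t-1$ becomes $-\half$, so the quantity is $\Var_{\mathcal{H}_1}(B^{-1/2})$. The lemma gives $I(2t-1)-I(t)^2=I(0)-\rho^2$.

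On the $\mathcal{H}_2$ side, the quantity is $\Var_{\mathcal{H}_2}(B^{1/2})$, and the lemma gives $I(2t)-I(t)^2=I(1)-\rho^2$.

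By Lemma~\ref{lem:basic-bounds}(i), $I(0)=I(1)=1$. Hence both variances equal $1-\rho^2$.

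It remains to confirm that these values are genuine finite variances, not the $+\infty$ allowed by the extended-variance convention. That convention says $\Var_{\mathcal{H}_2}(B^t)<\infty$ exactly when $2t\in D$, and $\Var_{\mathcal{H}_1}(B^{t-1})<\infty$ exactly when $2t-1\in D$. At $t=\half$ these conditions read $1\in D$ and $0\in D$, both of which hold by part~(ii). Concretely, the second moments are $I(1)=1$ and $I(0)=1$.

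Finally, since $\rho\in(0,1]$, we get $1-\rho^2\in[0,1)$, so the common value is finite and nonnegative. No step here is a real obstacle. The only point needing care is the finiteness check just described: it guarantees we are not in the extended-variance case, and it holds for every mutually absolutely continuous pair $(p_1,p_2)$.
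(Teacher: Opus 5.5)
Your proposal is correct and matches the paper's proof: specialize Lemma~\ref{lem:var-identities} at $t=\tfrac12$ and use $I(0)=I(1)=1$ from Lemma~\ref{lem:basic-bounds}(i). Your explicit check that $0,1\in D$, so the extended-variance convention never gives $+\infty$, is a harmless addition that the paper leaves implicit.
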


\begin{proof}
Apply Lemma~\ref{lem:var-identities} at $t=\half$ and use Lemma~\ref{lem:basic-bounds}(i), i.e., $I(0)=I(1)=1$, together with~\eqref{eq:rho-def}.
\end{proof}

Corollary~\ref{cor:half-order-eq} already hints at the special role of $t=\half$:
the relevant second moments are anchored at $I(0)$ and $I(1)$, which are identically~$1$
for \emph{all} pairs $(p_1,p_2)$, whereas for generic $t$ one must control $I(2t)$ and $I(2t-1)$,
which may fail to be finite.

To formalize a minimax statement, we use a cumulant generating function viewpoint.

\begin{lemma}[CGF of $\log B$ and log-convexity of $I$]
\label{lem:cgf}
Let $Y:=\log B$ and $\phi(t):=\log I(t)$.
For $t\in D$, $I(t)=\E_{\mathcal{H}_2}[e^{tY}]$, hence $\phi$ is the cumulant generating function of $Y$
under $\mathcal{H}_2$.
In particular, $I$ is log-convex on $D$ (equivalently, $\phi$ is convex on $D$), and $D$ is a convex set.
Moreover, if $p_1\neq p_2$ $\mu$-a.e., then $\phi$ is strictly convex on $\mathrm{int}(D)$.
\end{lemma}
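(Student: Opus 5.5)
The plan is to work directly on the $\mathcal{H}_2$ side, using the representation from Theorem~\ref{thm:TG}. Set $Y=\log B$, which is $P_2$-a.s.\ finite by mutual absolute continuity~\eqref{eq:mutual-ac}. Then
\[
I(t)=\int (p_1/p_2)^t p_2\,d\mu=\E_{\mathcal{H}_2}[e^{tY}],
\]
valid in $[0,\infty]$ for every real $t$. So $\phi=\log I$ is the cumulant generating function of $Y$ under $\mathcal{H}_2$, and $D$ is its effective domain.

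For convexity, take $t_0,t_1\in D$, $\lambda\in(0,1)$, and $t_\lambda=(1-\lambda)t_0+\lambda t_1$. Write
\[
e^{t_\lambda Y}=\bigl(e^{t_0Y}\bigr)^{1-\lambda}\bigl(e^{t_1Y}\bigr)^{\lambda}.
\]
Hölder's inequality with exponents $1/(1-\lambda)$ and $1/\lambda$ then gives
\[
I(t_\lambda)\le I(t_0)^{1-\lambda}I(t_1)^{\lambda}<\infty .
\]
This proves at once that $D$ is convex (an interval) and that $I$ is log-convex on $D$. Since $I(t)>0$ for all $t$ (because $e^{tY}>0$ a.s.), $\phi$ is real-valued on $D$, and taking logs gives convexity of $\phi$.

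For strict convexity on $\mathrm{int}(D)$, I will use the equality case of Hölder. Equality at some $\lambda\in(0,1)$ with $t_0\neq t_1$ forces $e^{t_0Y}$ and $e^{t_1Y}$ to be proportional $P_2$-a.s., that is, $e^{(t_1-t_0)Y}$ is a.s.\ constant. Hence $Y$ is a.s.\ constant, say $Y\equiv c$, and $B\equiv e^c$ $P_2$-a.s. Corollary~\ref{cor:Turing} gives $\E_{\mathcal{H}_2}[B]=1$, so $c=0$. Then $p_1=p_2$ $P_2$-a.s., and by mutual absolute continuity $\mu$-a.e. on $\{p_1>0\}\cup\{p_2>0\}$, with both vanishing elsewhere. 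This contradicts $p_1\neq p_2$.

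One point needs care: strict convexity means strict inequality for every pair of distinct points. The Hölder argument handles any chord between points of $D$ directly, so restricting to the interior is harmless. Alternatively one could differentiate under the integral on $\mathrm{int}(D)$ and note that $\phi''(t)$ is the variance of $Y$ under the tilted law $\propto e^{tY}p_2$, which is positive unless $Y$ is constant. I would mention this as a remark, but the Hölder route avoids dominated-convergence justifications.

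The main obstacle is bookkeeping rather than substance. First, the hypothesis ``$p_1\neq p_2$ $\mu$-a.e.'' should be read as ``not $\mu$-a.e.\ equal''. Second, the equality case of Hölder requires the relevant functions to lie in the corresponding $L^p$ spaces, which holds because $t_0,t_1\in D$.
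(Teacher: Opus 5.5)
Your proof is correct and follows the same route as the paper. Hölder's inequality applied to $e^{t_\lambda Y}$ under $\mathcal{H}_2$ gives log-convexity of $I$ and convexity of $D$ together, and the equality case of Hölder gives strictness; you also spell out the step the paper compresses into ``$Y$ is nondegenerate'' (proportionality forces $Y\equiv c$, and $\E_{\mathcal{H}_2}[B]=1$ forces $c=0$), and you rightly note that this yields strict convexity on all of $D$, not just $\mathrm{int}(D)$.
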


\begin{proof}
The identity $I(t)=\E_{\mathcal{H}_2}[e^{tY}]$ is immediate from Theorem~\ref{thm:TG} and the definition of $Y$.
For log-convexity, let $t_1,t_2\in D$ and $\lambda\in(0,1)$.
By H\"older's inequality,
\[
I\bigl(\lambda t_1+(1-\lambda)t_2\bigr)
=\E_{\mathcal{H}_2}\!\left[e^{(\lambda t_1+(1-\lambda)t_2)Y}\right]
=\E_{\mathcal{H}_2}\!\left[e^{\lambda t_1 Y}\,e^{(1-\lambda)t_2 Y}\right]
\le I(t_1)^{\lambda}\,I(t_2)^{1-\lambda}.
\]
Taking logarithms yields convexity of $\phi$, and in particular the left-hand side is finite, so $D$ is convex.
If $p_1\neq p_2$ $\mu$-a.e., then $Y$ is nondegenerate under $\mathcal{H}_2$, and strictness of H\"older
(for $t_1\neq t_2$ in $\mathrm{int}(D)$) implies strict convexity of $\phi$ on $\mathrm{int}(D)$.
\end{proof}

\subsection{A minimax principle for choosing the exponent}

For $t\in D$, define
\[
  R_1(t):=\Var_{\mathcal{H}_1}(B^{t-1})\in[0,\infty],\qquad
  R_2(t):=\Var_{\mathcal{H}_2}(B^{t})\in[0,\infty].
\]
Since $I(t)<\infty$ for $t\in D$, Lemma~\ref{lem:var-identities} together with the convention on extended variances implies that $R_2(t)<\infty$ if and only if $2t\in D$, and $R_1(t)<\infty$ if and only if $2t-1\in D$.
We define the worst-side risk
\[
  R(t):=\max\{R_1(t),R_2(t)\}\in[0,\infty].
\]

\paragraph*{Notation (Worst-case comparisons).}
Define the class of mutually absolutely continuous density pairs
\[
\mathcal P_{\mathrm{ac}}
:=\Bigl\{(p_1,p_2):\ p_1,p_2 \text{ are }\mu\text{-densities on }(\mathcal X,\mathcal A)\text{ satisfying \eqref{eq:mutual-ac}}\Bigr\}.
\]
All ``worst-case'' suprema below (e.g., $\sup_{(p_1,p_2)\in\mathcal P_{\mathrm{ac}}}$) are taken over this class.

We now show that $t=\half$ is the unique minimax choice for $R(t)$ and, moreover, the only exponent
whose worst-case risk is uniformly finite over mutually absolutely continuous pairs $(p_1,p_2)$.

\begin{theorem}[Half-order is minimax and uniquely worst-case stable]
\label{thm:minimax}
Assume $p_1\neq p_2$ $\mu$-a.e.
\begin{enumerate}
\item[(i)] \textnormal{(Pairwise minimaxity and uniqueness.)}
For every $t\in D$,
\[
R(t)\ \ge\ 1-\rho^2,
\]
with equality if and only if $t=\half$.
In particular, $t=\half$ uniquely minimizes $R(t)$ over $t\in D$, and
$R(\half)=1-\rho^2$.

\item[(ii)] \textnormal{(Worst-case divergence away from half-order.)}
For every $t\neq \half$, there exists a mutually absolutely continuous pair $(p_1,p_2)$ for which
$I(t)<\infty$ but $R(t)=+\infty$.
In particular, for every fixed $t\neq\half$,
\[
\sup\Big\{R_{p_1,p_2}(t):\ (p_1,p_2)\in\mathcal P_{\mathrm{ac}},\ I(t)<\infty\Big\}=+\infty,
\]
whereas for every $(p_1,p_2)\in\mathcal P_{\mathrm{ac}}$,
\[
R_{p_1,p_2}(\half)=1-\rho(p_1,p_2)^2<\infty.
\]

\end{enumerate}
\end{theorem}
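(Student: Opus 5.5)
The plan is to reduce part (i) to the single-side inequality $R_2(t)>1-\rho^2$ for $t>\half$. Part (ii) will be done by an explicit countable-space construction.

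\textbf{Reduction by label swap.} Swapping the labels $p_1\leftrightarrow p_2$ replaces $B$ by $B^{-1}$ and $I(t)$ by $I(1-t)$, and leaves $\rho$ unchanged. Under this swap, $R_1(t)=I(2t-1)-I(t)^2$ becomes exactly the $R_2$-type quantity $I(2s)-I(s)^2$ at $s=1-t$. So the case $t<\half$ (handled through $R_1$) is the mirror image of the case $t>\half$ (handled through $R_2$). Equality at $t=\half$ is Corollary~\ref{cor:half-order-eq}. It therefore suffices to show
\[
R_2(t)=I(2t)-I(t)^2>1-\rho^2=I(1)-I(\half)^2\qquad\text{for } t>\half,\ t\in D .
\]
If $2t\notin D$ this is trivial, since $R_2(t)=+\infty$.

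\textbf{Monotonicity of $v$.} Otherwise write $Z:=B^{1/2}$ under $\Htwo$, so that $\E Z^2=1$ and $\E Z=\rho$. Then $R_2(t)=\Var(Z^{u})$ with $u=2t>1$. Define
\[
v(s):=I(2s)-I(s)^2=\Var_{\Htwo}(e^{sY}),\qquad Y=\log B .
\]
I would try to show that $v$ is strictly increasing on $[\half,\infty)\cap\{2s\in D\}$. A first attempt would be to differentiate under the integral. Using the CGF $\phi$ of Lemma~\ref{lem:cgf}, this gives $v'(s)=2I(2s)\phi'(2s)-2I(s)^2\phi'(s)$. Strict convexity of $\phi$ on $\mathrm{int}(D)$ gives $\phi'(2s)>\phi'(s)$. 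Moreover $\phi'(2s)>0$ for $2s>1$, because $\phi(0)=\phi(1)=0$ and $\phi$ is strictly convex. Together with $I(2s)\ge I(s)^2$, from convexity of $\phi$ between $0$ and $2s$, this would give $v'(s)>0$ whenever $\phi'(s)\ge 0$.

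The remaining range is $s\in(\half,s^\ast)$ with $\phi'(s)<0$, where $s^\ast<1$ is the minimiser of $\phi$. There $I(s)^2\le\rho^2$ already, because $\phi$ is decreasing up to $s^\ast$. Also $I(2s)>1$, by Lemma~\ref{lem:basic-bounds}(iii) and the assumption $p_1\neq p_2$. Hence $v(s)>1-\rho^2$ directly. Combining the two ranges yields the strict inequality. Boundary points of $D$ are handled by monotone convergence.

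\textbf{Main obstacle.} This step is the expected bottleneck: making the derivative/monotonicity argument rigorous, including differentiability at the edge of $D$ and the gluing of the two ranges. If it fails, my fallback is a direct Hölder interpolation between the three points $\half$, $t$ and $2t$, again exploiting $\phi(1)=0$.

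\textbf{Part (ii).} By the label-swap symmetry, it suffices to treat $t>\half$. I would work on $\mathcal X=\mathbb N$ with counting measure and aim for a pair whose domain is
\[
D=(-\infty,\,\bar s]\quad\text{with}\quad \max(t,1)\le\bar s<2t .
\]
Then $t\in D$ and $I(t)<\infty$, but $2t\notin D$, so $R_2(t)=\infty$. Concretely, take
\[
p_1(k)\propto k^{-2}e^{-k},\qquad p_2(k)\propto e^{-\beta k},\qquad \beta>1 .
\]
Then
\[
p_1^s p_2^{1-s}\propto k^{-2s}e^{-(s+(1-s)\beta)k},
\]
which is summable iff $s<\bar s:=\beta/(\beta-1)$, or $s=\bar s$ and $2s>1$. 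Thus the endpoint $\bar s$ is included thanks to the polynomial factor. Choosing $\beta$ so that $\bar s=\max(t,1)$ gives the required domain; this is possible since $\max(t,1)<2t$ when $t>\half$. Both densities are strictly positive on $\mathbb N$, so the pair lies in $\mathcal P_{\mathrm{ac}}$. The final display of (ii) is then just Corollary~\ref{cor:half-order-eq} applied to an arbitrary pair in $\mathcal P_{\mathrm{ac}}$.
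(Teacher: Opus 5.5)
Your proposal is correct, but part (i) takes a different and heavier route than the paper. You differentiate the additive variance $v(s)=I(2s)-I(s)^2$. Because $v'(s)=2I(2s)\phi'(2s)-2I(s)^2\phi'(s)$ carries the weights $I(2s)$ and $I(s)^2$, its sign is only clear when $\phi'(s)\ge 0$. That is what forces your split at the minimiser $s^\ast$ of $\phi$, plus the separate direct bound ($I(s)\le\rho$, $I(2s)>1$) on $(\half,s^\ast]$.

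The paper works on the log scale with $g_+(t)=\phi(2t)-2\phi(t)$. Its right-derivative $2\phi'_+(2t)-2\phi'_+(t)$ is nonnegative by monotonicity of $\phi'_+$ alone. This gives the multiplicative bound $I(t)^2\le\rho^2 I(2t)$ for every $t\ge\half$ at once. Hence $R_2(t)\ge I(2t)(1-\rho^2)>1-\rho^2$ by Lemma~\ref{lem:basic-bounds}(iii). The case $t<\half$ is handled in parallel through $g_-$ rather than by your label swap; your swap ($I\mapsto I(1-\cdot)$, $R_1(t)\mapsto R_2(1-t)$, $\rho$ unchanged) is equally valid. The same inequality also follows without derivatives: the chord slope of $\phi$ over $[1,2t]$ dominates the chord slope over $[\half,t]$. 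That is exactly your H\"older fallback through $\half$, $t$, $1$, $2t$, which would have been the shorter path.

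The paper's route buys a reusable quantitative inequality, invoked again as $I(2)\ge\rho^{-2}$ in Theorem~\ref{thm:unified}(ii). Yours yields only $v(s)>1-\rho^2$. The technical points you flag are all fine. First, $s\in\mathrm{int}(D)$ automatically, since $[0,2s]\subset D$. Second, $I$ is smooth on $\mathrm{int}(D)$. Third, $I$ is left-continuous at $\sup D$ when that point lies in $D$, so the gluing goes through.

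For (ii), the paper takes $p_2\equiv 1$ on $(0,1)$ with $p_1\propto x^{-1/(2t)}$ for $t>\half$, or $p_1\propto x^{1/(1-2t)}$ for $t<\half$. This places the non-integrable singularity exactly at order $2t$ or $2t-1$. Your exponential-tail construction on $\{1,2,\dots\}$ plus the label swap is a valid alternative, but it has a slip. Since $\bar s=\beta/(\beta-1)$ only takes values in $(1,\infty)$, the choice $\bar s=\max(t,1)$ is unattainable for $t\in(\half,1]$. The fix is immediate: take any $\bar s\in(1,2t)$ there, and take $\bar s=t$ (or any $\bar s\in(t,2t)$) when $t>1$. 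With $\bar s$ strictly between $t$ and $2t$, the polynomial factor $k^{-2}$ is not even needed.
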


\begin{remark}[Degenerate equal-model case]
If $p_1=p_2$ $\mu$-a.e., then $B\equiv 1$ and $I(t)=1$ for all $t$, so $R_1(t)=R_2(t)=0$ and exponent selection is immaterial.
The assumption $p_1\neq p_2$ in Theorem~\ref{thm:minimax} is used only to obtain strict inequalities and uniqueness.
\end{remark}

\begin{proof}
Write $\phi(t)=\log I(t)$ as in Lemma~\ref{lem:cgf}.
For $t\ge \half$ (with $2t\in D$) define $g_+(t):=\phi(2t)-2\phi(t)$ and for $t\le \half$ (with $2t-1\in D$) define $g_-(t):=\phi(2t-1)-2\phi(t)$.
Since $\phi$ is convex, its right-derivative $\phi'_+(t)$ exists on $\mathrm{int}(D)$ and is nondecreasing.
Thus for $t\ge \half$ (whenever $t,2t\in \mathrm{int}(D)$),
\[
g'_+(t)=2\phi'_+(2t)-2\phi'_+(t)\ge 0.
\]
Similarly, for $t\le \half$ (whenever $t,2t-1\in \mathrm{int}(D)$),
\[
g'_-(t)=2\phi'_+(2t-1)-2\phi'_+(t)\le 0.
\]
Therefore $g_+$ is nondecreasing on $\{t\in D:\ t\ge \half,\ 2t\in D\}$ and $g_-$ is nonincreasing on $\{t\in D:\ t\le \half,\ 2t-1\in D\}$.
Exponentiating, for all $t$ such that the displayed quantities are finite,
\begin{align}
\frac{I(2t)}{I(t)^2}
= e^{g_+(t)}
\ \ge\
e^{g_+(\half)}
=\frac{I(1)}{I(\half)^2}
=\rho^{-2}
\qquad & (t\ge \half),
\label{eq:ratio-right}\\[2pt]
\frac{I(2t-1)}{I(t)^2}
= e^{g_-(t)}
\ \ge\
e^{g_-(\half)}
=\frac{I(0)}{I(\half)^2}
=\rho^{-2}
\qquad & (t\le \half).
\label{eq:ratio-left}
\end{align}

\noindent\emph{Proof of (i).}
We split by the location of $t$ relative to~$\half$.

\emph{Case $t>\half$.}
If $I(2t)=\infty$, then $\E_{\mathcal{H}_2}[B^{2t}]=I(2t)=\infty$, hence $\Var_{\mathcal{H}_2}(B^t)=\infty$ and therefore $R(t)=\infty\ge 1-\rho^2$.
Otherwise $I(2t)<\infty$, and~\eqref{eq:ratio-right} yields $I(t)^2\le \rho^2 I(2t)$.
Thus,
\[
R_2(t)=I(2t)-I(t)^2 \ \ge\ I(2t)\,(1-\rho^2).
\]
Moreover, since $2t>1$, Lemma~\ref{lem:basic-bounds}(iii) gives $I(2t)>1$ for $p_1\neq p_2$, so $R_2(t)>1-\rho^2$.

\emph{Case $t<\half$.}
If $I(2t-1)=\infty$, then $\E_{\mathcal{H}_1}[B^{2(t-1)}]=I(2t-1)=\infty$, hence $\Var_{\mathcal{H}_1}(B^{t-1})=\infty$ and therefore $R(t)=\infty\ge 1-\rho^2$.
Otherwise $I(2t-1)<\infty$, and~\eqref{eq:ratio-left} yields $I(t)^2\le \rho^2 I(2t-1)$.
Therefore,
\[
R_1(t)=I(2t-1)-I(t)^2 \ \ge\ I(2t-1)\,(1-\rho^2).
\]
Since $2t-1<0$, Lemma~\ref{lem:basic-bounds}(iii) gives $I(2t-1)>1$ for $p_1\neq p_2$, so $R_1(t)>1-\rho^2$.

\emph{Case $t=\half$.}
By Corollary~\ref{cor:half-order-eq}, $R(\half)=1-\rho^2$ and both side-specific variances coincide.

Combining the three cases proves $R(t)\ge 1-\rho^2$ for all $t\in D$, with equality only at $t=\half$.

\medskip\noindent\emph{Proof of (ii).}
Fix $t\neq \half$.
It suffices to exhibit (for each such $t$) a single mutually absolutely continuous pair $(p_1,p_2)$
for which $I(t)<\infty$ but either $I(2t)=\infty$ (when $t>\half$) or $I(2t-1)=\infty$ (when $t<\half$),
because then~\eqref{eq:var-id} gives $R(t)=\infty$.

Work on $((0,1),\mathcal B,\mu)$ with $\mu$ the Lebesgue measure and take $p_2(x)\equiv 1$.
\begin{itemize}
\item If $t>\half$, let $\gamma:=1/(2t)\in(0,1)$ and set $p_1(x)=(1-\gamma)x^{-\gamma}\mathbf 1_{(0,1)}(x)$.
Then
\[
I(s)=\int_0^1 p_1(x)^s\,dx
=(1-\gamma)^s\int_0^1 x^{-\gamma s}\,dx
=\begin{cases}
\dfrac{(1-\gamma)^s}{1-\gamma s}, & \gamma s<1,\\[6pt]
+\infty, & \gamma s\ge 1.
\end{cases}
\]
Since $\gamma t=\half<1$ while $\gamma(2t)=1$, we have $I(t)<\infty$ but $I(2t)=\infty$, so $R_2(t)=\infty$.

\item If $t<\half$, let $\gamma:=1/(1-2t)>0$ and set $p_1(x)=(\gamma+1)x^{\gamma}\mathbf 1_{(0,1)}(x)$.
Then
\[
I(s)=\int_0^1 p_1(x)^s\,dx
=(\gamma+1)^s\int_0^1 x^{\gamma s}\,dx
=\begin{cases}
\dfrac{(\gamma+1)^s}{\gamma s+1}, & \gamma s>-1,\\[6pt]
+\infty, & \gamma s\le -1.
\end{cases}
\]
Here $\gamma t=t/(1-2t)>-1$, so $I(t)<\infty$, but $\gamma(2t-1)=-1$, so $I(2t-1)=\infty$ and hence $R_1(t)=\infty$.
\end{itemize}
In both cases, the constructed $(p_1,p_2)$ are strictly positive on $(0,1)$ and therefore mutually absolutely continuous, completing the proof.
\end{proof}

\section{Good checks: two-sided half-order diagnostics and simulation studies}\label{sec:goodchecks}

This section develops the implications of Section~\ref{sec:TG-variance-halforder} for \emph{Good checks} of Bayes factor
computations \citep{Sekulovski2024}.
We first recall why moment-identity checks can be fragile in practice, due to the skewness and heavy tails of the Bayes
factor under either hypothesis.
We then formalize a two-sided diagnostic based on independently estimating both sides of the Turing--Good identity,
establishing that the half-order choice is uniquely minimax-stable in a two-sample sense and yields a favorable
matched-budget variance comparison.
Finally, we provide numerical illustrations that highlight the finite-sample stability of the half-order check and its
ability to detect implementation and prior mismatches.

\subsection{Good checks for Bayes factor computations: motivation and bottlenecks}\label{subsec:goodcheck}

The immediate motivation for this work is the \emph{Good check} proposed by
\citet{Sekulovski2024}, a practical diagnostic that leverages the
Turing--Good identities to validate numerical Bayes factor computations.
Good checks are attractive because they are broadly \emph{model-agnostic}: they apply to any pair of
marginal likelihoods $(p_1,p_2)$ defined on a common measure space, provided one can
(i) simulate from the corresponding prior predictives and (ii) evaluate the Bayes factor $B=p_1/p_2$
for each simulated dataset.
At the same time, the diagnostic inherits a fundamental difficulty from Bayes factors
themselves---namely, the skewness and heavy-tail behavior of $B=p_1/p_2$ under either model.
We summarize key bottlenecks of the original Good-check design and use them to motivate the
half-order viewpoint developed in Section~\ref{sec:TG-variance-halforder}.

\paragraph*{Existing bottleneck: Good checks are vulnerable to the tails of the Bayes factor distribution.}
The Turing--Good identities are exact population equalities
(e.g., Corollary~\ref{cor:Turing} and Theorem~\ref{thm:TG}), but Good checks must evaluate them
numerically, typically via Monte Carlo:
one simulates synthetic datasets under a designated generating model, computes Bayes factors using the
implementation under scrutiny, and then checks whether the relevant empirical averages
match the theoretical values \citep{Sekulovski2024}.
The practical difficulty is that the Bayes factor distribution under the generating model is often
highly asymmetric with heavy tails.
As a result, Monte Carlo averages can be dominated by \emph{rare extreme events}:
even though such events have tiny probability, they can induce astronomically large Bayes factor values,
and hence contribute non-negligibly (or even decisively) to the mean.

A simple illustration already appears in the canonical binomial point-null example.
When $\Htwo:\theta=1/2$ is true and one evaluates a Bayes factor favoring $\Hone$,
events such as $y=0$ or $y=1$ have probabilities $2^{-n}$ and $n2^{-n}$, respectively;
for $n=50$ these are about $8.9\times 10^{-16}$ and $4.4\times 10^{-14}$.
Such events are practically never observed in finite Monte Carlo runs, yet the Bayes factor values they
produce can be extremely large.
Consequently, the empirical mean may fail to approximate its theoretical value even when the
implementation is correct, leading to ``false alarms'' or inconclusive diagnostics
\citep{Sekulovski2024}.

\paragraph*{Existing bottleneck: integer-moment checks amplify tail sensitivity, and existing practice requires choosing a ``true'' (or ``more complex'') model.}
The original Good check emphasizes integer-moment instances---most notably the $t=1$ (Turing)
and $t=2$ cases. The $t=2$ check effectively brings in a \emph{higher-order moment}
(equivalently, a squared Bayes factor term) and therefore magnifies tail sensitivity: in principle,
the larger the moment order, the more severely rare extremes influence the Monte Carlo estimate.
This creates two practical complications emphasized by \citet{Sekulovski2024}.

First, numerical stability becomes fragile in exactly the settings where diagnostics are
most needed (weak overlap between $p_1$ and $p_2$, large sample size, sharp marginal likelihoods).
Second, to mitigate this instability, one is often advised to generate synthetic data from the
``more complex'' model when computing the tail-sensitive terms.
However, in many realistic comparisons it is not obvious which model should be regarded as
more complex (e.g., multiple nonlinear models of similar dimension and flexibility), and the
diagnostic becomes contingent on a subjective design choice.

\paragraph*{Design question and resolution: choosing an exponent with two-sided, distribution-free stability.}
The Turing--Good identity holds for a continuum of exponents.
However, Monte Carlo reliability is governed by second moments, and the variance identities
in Lemma~\ref{lem:var-identities} show that the two side-specific risks depend on the shifted overlaps
$I(2t)$ and $I(2t-1)$.
Section~\ref{sec:TG-variance-halforder} proves that $t=\tfrac12$ is the \emph{unique} exponent that
(i) equalizes the two variances and (ii) guarantees a finite worst-side variance uniformly over all
mutually absolutely continuous pairs $(p_1,p_2)$ (Theorem~\ref{thm:minimax}).
The next subsection turns this minimax principle into a concrete two-sided diagnostic and compares it,
at matched Monte Carlo cost, to the one-sided $t=1$ check emphasized by \citet{Sekulovski2024}.

\subsection{A two-sided check and a cost-matched comparison}

Theorem~\ref{thm:TG} implies that, for every $t\in D$,
\[
\E_{\mathcal{H}_2}[B^t]-\E_{\mathcal{H}_1}[B^{t-1}]=0.
\]
A natural way to empirically assess this identity is to estimate the two expectations separately from
independent samples and to compare the corresponding sample means.
The next lemma shows that the half-order choice is again minimax---now for the variance of the two-sided difference.

\begin{lemma}[Unique minimax for the two-sided check]
\label{lem:two-sided}
Let $X_1\sim\mathcal{H}_1$ and $X_2\sim\mathcal{H}_2$ be independent and define
$\Delta_t:=B(X_2)^t-B(X_1)^{t-1}$.
Then, at $t=\half$,
\[
\Var(\Delta_{1/2}) = 2(1-\rho^2)<\infty,
\]
whereas for every $t\neq \half$,
\[
\sup_{(p_1,p_2)\in\mathcal P_{\mathrm{ac}}}\Var(\Delta_t)=+\infty.
\]
\end{lemma}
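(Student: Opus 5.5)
The plan is to reduce everything to the one-sided results of Section~\ref{sec:TG-variance-halforder} using independence. Since $X_1$ and $X_2$ are independent, the two summands of $\Delta_t$ are independent random variables. Hence, in the extended sense $[0,\infty]$,
\[
\Var(\Delta_t)=\Var_{\mathcal{H}_2}\bigl(B^t\bigr)+\Var_{\mathcal{H}_1}\bigl(B^{t-1}\bigr)=R_2(t)+R_1(t).
\]
If both second moments are finite, this is the usual variance-of-independent-sum formula. If either second moment is infinite, then $\E[\Delta_t^2]=\infty$. To see this, condition on the other, independent and a.s.\ finite, variable: $\E[(a-Z)^2]=\infty$ for every fixed $a$ whenever $\E[Z^2]=\infty$. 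Both sides are then $+\infty$, consistent with the extended-variance convention.

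At $t=\half$, Corollary~\ref{cor:half-order-eq} gives $R_1(\half)=R_2(\half)=1-\rho^2$. The sum is therefore $2(1-\rho^2)$, which is finite because $\rho\in(0,1]$ by \eqref{eq:rho-def}. This holds for every pair in $\mathcal P_{\mathrm{ac}}$, including the degenerate case $p_1=p_2$, where the value is $0$.

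For $t\neq\half$, I would reuse the explicit constructions from the proof of Theorem~\ref{thm:minimax}(ii) on $(0,1)$ with Lebesgue measure and $p_2\equiv 1$.
\begin{itemize}
\item For $t>\half$, take $p_1(x)=(1-\gamma)x^{-\gamma}$ with $\gamma=1/(2t)$. Then $I(t)<\infty$ and $I(2t)=\infty$, so $R_2(t)=\infty$.
\item For $t<\half$, take $p_1(x)=(\gamma+1)x^{\gamma}$ with $\gamma=1/(1-2t)$. Then $I(t)<\infty$ and $I(2t-1)=\infty$, so $R_1(t)=\infty$.
\end{itemize}
In either case $\Var(\Delta_t)\ge\max\{R_1(t),R_2(t)\}=\infty$, and this pair lies in $\mathcal P_{\mathrm{ac}}$. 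So the supremum equals $+\infty$; it is in fact attained, which is stronger than an unbounded sequence. Since $I(t)<\infty$, both means are well defined, so $\Delta_t$ has a finite mean and its variance is meaningfully infinite rather than undefined.

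The only step needing care is the infinite case of the variance decomposition: showing that infiniteness of one side's second moment forces $\Var(\Delta_t)=\infty$. I would handle it by Fubini, writing $\E[\Delta_t^2]=\E_{X_1}\bigl[\E_{X_2}[(B(X_2)^t-c)^2]\bigr]$ with $c=B(X_1)^{t-1}$. When $\E[B(X_2)^{2t}]=\infty$, the inner expectation is infinite for every finite $c$, because $(u-c)^2\ge u^2/2-c^2$. The symmetric argument covers the case where the $\mathcal{H}_1$ side's second moment is infinite. Everything else follows directly from earlier results.
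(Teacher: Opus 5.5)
Your proposal is correct and follows essentially the same route as the paper: split $\Var(\Delta_t)$ into $\Var_{\mathcal{H}_2}(B^t)+\Var_{\mathcal{H}_1}(B^{t-1})$ by independence, evaluate at $t=\half$ via Corollary~\ref{cor:half-order-eq}, and reuse the constructions of Theorem~\ref{thm:minimax}(ii) for $t\neq\half$. Your Fubini argument based on $(u-c)^2\ge u^2/2-c^2$ adds rigor the paper lacks, since the paper simply asserts that an infinite one-sided second moment forces $\Var(\Delta_t)=\infty$.
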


\begin{proof}
By independence and Lemma~\ref{lem:var-identities},
\[
\Var(\Delta_t)=\Var_{\mathcal{H}_2}(B^t)+\Var_{\mathcal{H}_1}(B^{t-1})
=\{I(2t)-I(t)^2\}+\{I(2t-1)-I(t)^2\}.
\]
At $t=\half$ this becomes $\{I(1)-I(\half)^2\}+\{I(0)-I(\half)^2\}=2(1-\rho^2)$.
For $t\neq \half$, Theorem~\ref{thm:minimax}(ii) provides pairs $(p_1,p_2)$ for which either
$\Var_{\mathcal{H}_2}(B^t)=\infty$ (when $t>\half$) or $\Var_{\mathcal{H}_1}(B^{t-1})=\infty$ (when $t<\half$),
forcing $\Var(\Delta_t)=\infty$ for those pairs.
\end{proof}

Finally, we compare the (balanced) two-sample half-order check to the one-sample Turing check advocated by
\citet{Sekulovski2024}, under a fixed Monte Carlo budget.

We measure Monte Carlo cost by the number $N$ of Bayes factor evaluations (equivalently, draws at which $B=p_1/p_2$ is computed).
Consider:
\begin{enumerate}
\item[(a)] \textit{One-sample Turing check ($t=1$).}
Draw $X_1,\dots,X_N\stackrel{\mathrm{iid}}{\sim}\mathcal{H}_2$ and compute $\bar B_N=N^{-1}\sum_{i=1}^N B(X_i)$.
\item[(b)] \textit{Two-sample half-order check ($t=\half$).}
Draw $X_{11},\dots,X_{1n_1}\stackrel{\mathrm{iid}}{\sim}\mathcal{H}_1$ and
$X_{21},\dots,X_{2n_2}\stackrel{\mathrm{iid}}{\sim}\mathcal{H}_2$ independently, with $n_1+n_2=N$, and compute
$\bar B^{-1/2}_{1n_1}=n_1^{-1}\sum_{i=1}^{n_1}B(X_{1i})^{-1/2}$ and
$\bar B^{1/2}_{2n_2}=n_2^{-1}\sum_{i=1}^{n_2}B(X_{2i})^{1/2}$.
\end{enumerate}
For the one-sample check, the natural target is $\bar B_N-1$.
For the two-sample check, the natural target is $\bar B^{1/2}_{2n_2}-\bar B^{-1/2}_{1n_1}$.

\medskip
\noindent\textbf{Convention on the balanced split.}
When we write the balanced allocation as $n_1=n_2=N/2$, we tacitly assume that $N$ is even.
If $N$ is odd, one may take the nearest-integer split
$(n_1,n_2)=(\lfloor N/2\rfloor,\lceil N/2\rceil)$; this changes the displayed constants and
thresholds only by a factor $1+O(N^{-2})$.

\begin{theorem}[Worst-case variance bound and conditional matched-cost dominance]\label{thm:matched-cost}
\label{thm:unified}
Let $V_{(1)}:=\Var(\bar B_N-1)$ and let $V_{(2)}(n_1,n_2):=\Var(\bar B^{1/2}_{2n_2}-\bar B^{-1/2}_{1n_1})$, with $n_1+n_2=N$.
For the balanced split, define $V_{(2)}^{\mathrm{bal}}:=V_{(2)}(N/2,N/2)$.
Then:
\begin{enumerate}
\item[(i)] \textnormal{(Worst-case boundedness.)}
For every $(p_1,p_2)\in\mathcal P_{\mathrm{ac}}$,
under the balanced two-sample choice $n_1=n_2=N/2$ (with $N$ even),
\[
V_{(2)}^{\mathrm{bal}}=V_{(2)}(N/2,N/2)=\frac{4(1-\rho^2)}{N},
\]
hence $\sup_{(p_1,p_2)\in\mathcal P_{\mathrm{ac}}}V_{(2)}^{\mathrm{bal}}\le 4/N<\infty$.
\item[(ii)] \textnormal{(Conditional dominance.)}
Under the balanced split (with $N$ even),
whenever $\rho\le 1/2$,
$V_{(2)}^{\mathrm{bal}}\le V_{(1)}$.
\end{enumerate}
\end{theorem}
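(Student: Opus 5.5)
Both parts reduce to the closed-form variances available from Lemma~\ref{lem:var-identities} and Corollary~\ref{cor:half-order-eq}; no new probabilistic input is needed.

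\emph{Part (i).} The two samples are independent and each sample is i.i.d., so the variance of the difference of sample means is the sum of the two variances, each scaled by its sample size:
\[
V_{(2)}(n_1,n_2)=\frac{\Var_{\mathcal{H}_2}(B^{1/2})}{n_2}+\frac{\Var_{\mathcal{H}_1}(B^{-1/2})}{n_1}.
\]
By Corollary~\ref{cor:half-order-eq}, both variances equal $1-\rho^2$, and this is finite for every pair in $\mathcal P_{\mathrm{ac}}$. Hence
\[
V_{(2)}(n_1,n_2)=(1-\rho^2)\Bigl(\frac{1}{n_1}+\frac{1}{n_2}\Bigr),
\]
which at $n_1=n_2=N/2$ gives $4(1-\rho^2)/N$. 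The remark after Lemma~\ref{lem:basic-bounds} gives $\rho\in(0,1]$, so $1-\rho^2\le 1$. This bound is uniform over the class, which yields $\sup V_{(2)}^{\mathrm{bal}}\le 4/N$.

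\emph{Part (ii).} By Lemma~\ref{lem:var-identities} at $t=1$, $V_{(1)}=\{I(2)-1\}/N$, where $I(2)=\infty$ is allowed, in which case the claim is trivial. The task is therefore to show $4(1-\rho^2)\le I(2)-1$ whenever $\rho\le 1/2$. The key step is a lower bound on $I(2)$ in terms of $\rho$. Inequality~\eqref{eq:ratio-right} at $t=1$ reads $I(2)/I(1)^2\ge\rho^{-2}$, so $I(2)\ge\rho^{-2}$. Equivalently, this is Cauchy--Schwarz: $\rho=\E_{\mathcal{H}_2}[B^{1/2}]\le(\E_{\mathcal{H}_2}[B])^{1/2}\cdot 1$ does not suffice, but $1=\E_{\mathcal{H}_2}[B]=\E_{\mathcal{H}_2}[B^{1/2}B^{1/2}]$ combined with log-convexity between $1/2,1,2$ gives $I(1)^2\le I(1/2)^{2/3}I(2)^{4/3}$. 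The cleanest route is to cite~\eqref{eq:ratio-right} directly. It then suffices to verify $\rho^{-2}-1\ge 4(1-\rho^2)$. Writing $u=\rho^2\in(0,1]$, the left side is $(1-u)/u$, so the inequality reads $(1-u)/u\ge 4(1-u)$. This holds if $u=1$, and for $u<1$ it is equivalent to $u\le 1/4$, that is, $\rho\le 1/2$.

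The main obstacle is this lower bound $I(2)\ge\rho^{-2}$. I would double-check that~\eqref{eq:ratio-right} legitimately applies at $t=1$ when $2\in D$. The monotonicity argument in Theorem~\ref{thm:minimax} uses right-derivatives on the interior, so the boundary case may need care. As a backup, I would give the direct H\"older/log-convexity argument: $\phi$ is convex, so the slopes satisfy $\phi(2)-\phi(1)\ge\phi(1)-\phi(1/2)$ scaled appropriately. More precisely, convexity of $\phi$ on $[1/2,2]$ with $\phi(1)=0$ gives $\phi(1)\le\frac{2}{3}\phi(1/2)+\frac13\phi(2)$, i.e.\ $\phi(2)\ge-2\phi(1/2)=-2\log\rho$. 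This is exactly $I(2)\ge\rho^{-2}$, and it uses only Lemma~\ref{lem:cgf}, with no differentiability needed.
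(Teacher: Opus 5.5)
Your proposal is correct and follows the paper's proof essentially step for step: independence plus Corollary~\ref{cor:half-order-eq} gives $V_{(2)}(n_1,n_2)=(1/n_1+1/n_2)(1-\rho^2)$ for (i), and (ii) rests on $I(2)\ge\rho^{-2}$ from \eqref{eq:ratio-right} at $t=1$. Your three-point convexity derivation $\phi(1)\le\tfrac23\phi(\tfrac12)+\tfrac13\phi(2)$ is a tidier justification of that bound than the paper's right-derivative argument, which needs an extra continuity step when $2$ is an endpoint of $D$; however, delete the intermediate aside, whose exponents are swapped: log-convexity gives $I(1)^2\le I(\tfrac12)^{4/3}I(2)^{2/3}$, not $I(\tfrac12)^{2/3}I(2)^{4/3}$.
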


\begin{proof}
For the one-sample check, independence gives
\[
V_{(1)}=\Var_{\mathcal{H}_2}(B)/N=\{I(2)-I(1)^2\}/N=(I(2)-1)/N,
\]
where we used Lemma~\ref{lem:var-identities} at $t=1$ and Lemma~\ref{lem:basic-bounds}(i).
For the two-sample half-order check, independence and Corollary~\ref{cor:half-order-eq} yield
\[
V_{(2)}(n_1,n_2)
=\frac{1}{n_2}\Var_{\mathcal{H}_2}(B^{1/2})+\frac{1}{n_1}\Var_{\mathcal{H}_1}(B^{-1/2})
=\Bigl(\frac{1}{n_1}+\frac{1}{n_2}\Bigr)(1-\rho^2).
\]
For fixed $n_1+n_2=N$, the AM--GM inequality implies that $n_1n_2$ is maximized at $n_1=n_2=N/2$,
hence the balanced choice minimizes $V_{(2)}(n_1,n_2)$ and gives $V_{(2)}=4(1-\rho^2)/N$.

\smallskip\noindent
(i) Since $\rho\in(0,1]$, we have $1-\rho^2\le 1$, so $\sup_{(p_1,p_2)}V_{(2)}\le 4/N$.
On the other hand, by Theorem~\ref{thm:minimax}(ii), there exist pairs for which $I(2)=\infty$,
which implies $V_{(1)}=(I(2)-1)/N=\infty$.

\smallskip\noindent
(ii) By~\eqref{eq:ratio-right} at $t=1$,
$I(2)\ge\rho^{-2}$ (since $I(1)=1$), hence
\[
V(1)=\frac{I(2)-1}{N}\ge \frac{\rho^{-2}-1}{N}=\frac{1-\rho^2}{N\rho^2}.
\]
Under the balanced split $n_1=n_2=N/2$, we have $V_{(2)}^{\mathrm{bal}}=4\frac{1-\rho^2}{N}$.
Therefore $V(2)\le V(1)$ whenever
\[
\frac{4(1-\rho^2)}{N}\le \frac{1-\rho^2}{N\rho^2},
\]
i.e., whenever $4\le \rho^{-2}$, equivalently $\rho\le 1/2$.
\end{proof}

\paragraph*{Remark (When $\rho>\half$).}
When the two models are very similar (high overlap), $\rho$ can exceed $1/2$.
In this regime, the ordering between $V_{(1)}$ and $V_{(2)}$ is not determined by $\rho$ alone:
under the balanced split,
\[
V_{(2)}\le V_{(1)}
\quad\Longleftrightarrow\quad
I(2)\ \ge\ 1+4(1-\rho^2).
\]
Since $I(2)=\int p_1(x)^2/p_2(x)\,dx=1+\chi^2(p_1\|p_2)$, this condition is equivalent to
$\chi^2(p_1\|p_2)\ge 4(1-\rho^2)$.
Convexity alone guarantees only $I(2)\ge \rho^{-2}$, so either ordering is possible when $\rho>\half$.
Nevertheless, the worst-case comparison remains favorable to the half-order two-sample check:
$V_{(2)}$ is always finite and uniformly bounded by $4/N$, whereas $V_{(1)}$ can be arbitrarily large.

\subsection{Practical implementation and simulation studies}\label{subsec:goodcheck-sim}

The half-order theory suggests a simple default: estimate the overlap $\rho=I(\tfrac12)$ from \emph{both} generating models
and check agreement.
This yields a symmetric diagnostic that does not require deciding which model is ``true'' or ``more complex'' and, by
Theorem~\ref{thm:minimax}, guarantees bounded per-draw variance on both sides.
We summarize the resulting workflow and then report simulation studies.

\paragraph*{A half-order Good check.}
Let $B(x)=p_1(x)/p_2(x)$ be the Bayes factor computed by the numerical procedure under scrutiny.
A half-order Good check compares the two sides of Theorem~\ref{thm:TG} at $t=\tfrac12$:
\[
\mathbb{E}_{\mathcal{H}_2}\!\left[B^{1/2}\right]
\;=\;
\mathbb{E}_{\mathcal{H}_1}\!\left[B^{-1/2}\right]
\;=\;\rho.
\]
Algorithm~\ref{alg:two-sided-halforder} summarizes a one-box recipe for implementing this two-sided diagnostic.

\refstepcounter{algobox}\label{alg:two-sided-halforder}
\begin{center}
\fbox{\begin{minipage}{0.95\linewidth}
\textbf{Algorithm \thealgobox: Two-sided half-order Good check.}

\smallskip
\noindent\textbf{Input:} prior-predictive simulators under $\mathcal{H}_1$ and $\mathcal{H}_2$; a Bayes factor evaluator $B(x)=p_1(x)/p_2(x)$.\\
\textbf{Output:} $(\Delta,\widehat{\mathrm{se}}(\Delta),\widehat\rho)$.

\begin{enumerate}[label=\textbf{Step \arabic*:},leftmargin=*]
\item Choose Monte Carlo budgets $m_1,m_2$ (default $m_1=m_2$) and a tolerance level $\varepsilon$.
\item Draw $X_{2i}\stackrel{\text{i.i.d.}}{\sim} \mathcal{H}_2$ and compute $U_i:=B(X_{2i})^{1/2}$ for $i=1,\dots,m_2$.
\item Draw $X_{1i}\stackrel{\text{i.i.d.}}{\sim} \mathcal{H}_1$ and compute $V_i:=B(X_{1i})^{-1/2}$ for $i=1,\dots,m_1$.
\item Compute
\[
\widehat\rho_{\,2}:=\frac{1}{m_2}\sum_{i=1}^{m_2}U_i,\qquad
\widehat\rho_{\,1}:=\frac{1}{m_1}\sum_{i=1}^{m_1}V_i,\qquad
\Delta:=\widehat\rho_{\,2}-\widehat\rho_{\,1},\qquad
\widehat\rho:=\frac{\widehat\rho_{\,1}+\widehat\rho_{\,2}}{2}.
\]
\item Check if $\Delta$ is close enough to zero. If one wants an objective thteshold, one option may be to estimate $\widehat{\mathrm{se}}(\Delta)$ and calibrate $\varepsilon$ using the rules in Section~3.5; persistent $|\Delta|>\varepsilon$ flags a simulator--evaluator mismatch.
\end{enumerate}
\end{minipage}}
\end{center}

Under correct Bayes factor computation, both estimators target the \emph{same} quantity $\rho$, so the discrepancy $\Delta$
should be close to $0$ up to Monte Carlo error.
Because each summand has bounded variance at $t=\tfrac12$ (Theorem~\ref{thm:minimax}),
this check is intrinsically less sensitive to rare, extreme Bayes factor values than the integer-moment
choices emphasized in the original Good check.


\paragraph*{Symmetry: no need to decide which model is ``true'' (or ``more complex'').}
A further practical advantage of the half-order choice is symmetry between the two model orientations.
Writing $B=p_1/p_2$ (so $B^{-1}=p_2/p_1$), the half-order identity is
\[
\E_{\Htwo}\!\left[B^{1/2}\right]
=
\E_{\Hone}\!\left[B^{-1/2}\right]
=\rho.
\]
Equivalently, one may swap the model labels (i.e., replace $B$ by $B^{-1}$) without changing the diagnostic.
Thus the check need not privilege one model as ``true'' or ``more complex'': whichever direction is easier to
simulate or implement, the same theoretical target and the same variance guarantee apply.

\paragraph*{Matched-budget comparison: when overlap is small, the half-order two-sided check can dominate the one-sided Turing check.}
A common practical constraint is a fixed budget of $N$ Bayes factor evaluations.
Under such a budget, Sekulovski et al.'s one-sided Turing check targets $t=1$ with variance
proportional to $I(2)-1=\mathbb{E}_{\mathcal{H}_2}[B^2]-1$, which may be arbitrarily large or even
infinite.
By contrast, the balanced two-sided half-order design has variance proportional to
$1-\rho^2\le 1$ and is uniformly bounded in the worst case; see
Theorem~\ref{thm:unified} for a formal matched-cost comparison and dominance conditions.
Hence, even when one is primarily interested in checking Corollary~\ref{cor:Turing},
the half-order two-sided check provides a conservative default with minimax-stable behavior.



\subsection{Simulation studies}\label{subsec:sim-studies}

We use a binomial point-null example (adapted from \citet{Sekulovski2024}) to compare finite-run Monte Carlo behavior
across three Good-check choices, $t=\half,1,2$, and to illustrate sensitivity to simulator--evaluator mismatch.
Throughout, $B(y):=p_1(y)/p_2(y)$ denotes the Bayes factor in favor of $\Hone$ over $\Htwo$.
For each configuration and each $n\in\{10,50,100\}$, we conducted $R=10{,}000$ independent Good-check runs.
Each run uses $m=2{,}000$ prior-predictive draws from $\Hone$ and $m=2{,}000$ prior-predictive draws from $\Htwo$,
and we record Monte Carlo \emph{differences} for the half-order identity and for the integer-order ($t=1,2$) identities.
Tables report Mean (SD) over the $R$ runs.
Figures visualize the finite-$m$ behavior for $n=50$ using fan charts (mean and central 50\%/90\% bands across runs).

\subsubsection{Simulation study 1A: Binomial example (correct specification)}\label{subsec:sim1A_binom}

Model:
\[
  Y\mid \theta \sim \mathrm{Binomial}(n,\theta),\qquad
  \Htwo:\theta=\half,\qquad
  \Hone:\theta\sim \mathrm{Beta}(1,1).
\]
Bayes factors are evaluated under the same (correct) specification.
Table~\ref{tab:sim1_binom_correct} summarizes Monte Carlo differences for the half-order discrepancy
$\widehat\Delta_{1/2}:=\widehat{\E}_{\Htwo}[B^{1/2}]-\widehat{\E}_{\Hone}[B^{-1/2}]$ and for the $t=1,2$ checks
(both forward and reverse orientations).

\input{figs/Table1_1A_rev2.tex}

Table~\ref{tab:sim1_binom_correct} shows that the half-order two-sided discrepancy remains tightly centered near zero with
small run-to-run variability across all $n$.
In contrast, the \emph{forward} integer-order quantities involving $\widehat{\E}_{\Htwo}[B]$ and especially
$\widehat{\E}_{\Htwo}[B^2]$ can be extremely unstable in finite Monte Carlo runs because they are dominated by exceedingly
rare outcomes under $\Htwo$.
The reverse-oriented integer checks are markedly more stable on this scale.

Figure~\ref{fig:sim1_outliers_sqrt} complements the table by displaying the run-to-run distribution of \emph{running}
discrepancies as a function of $m$ for $n=50$ (half-order discrepancy, reverse Turing, and reverse Good).

\begin{figure}[!htbp]
  \centering
  \includegraphics[width=0.86\linewidth]{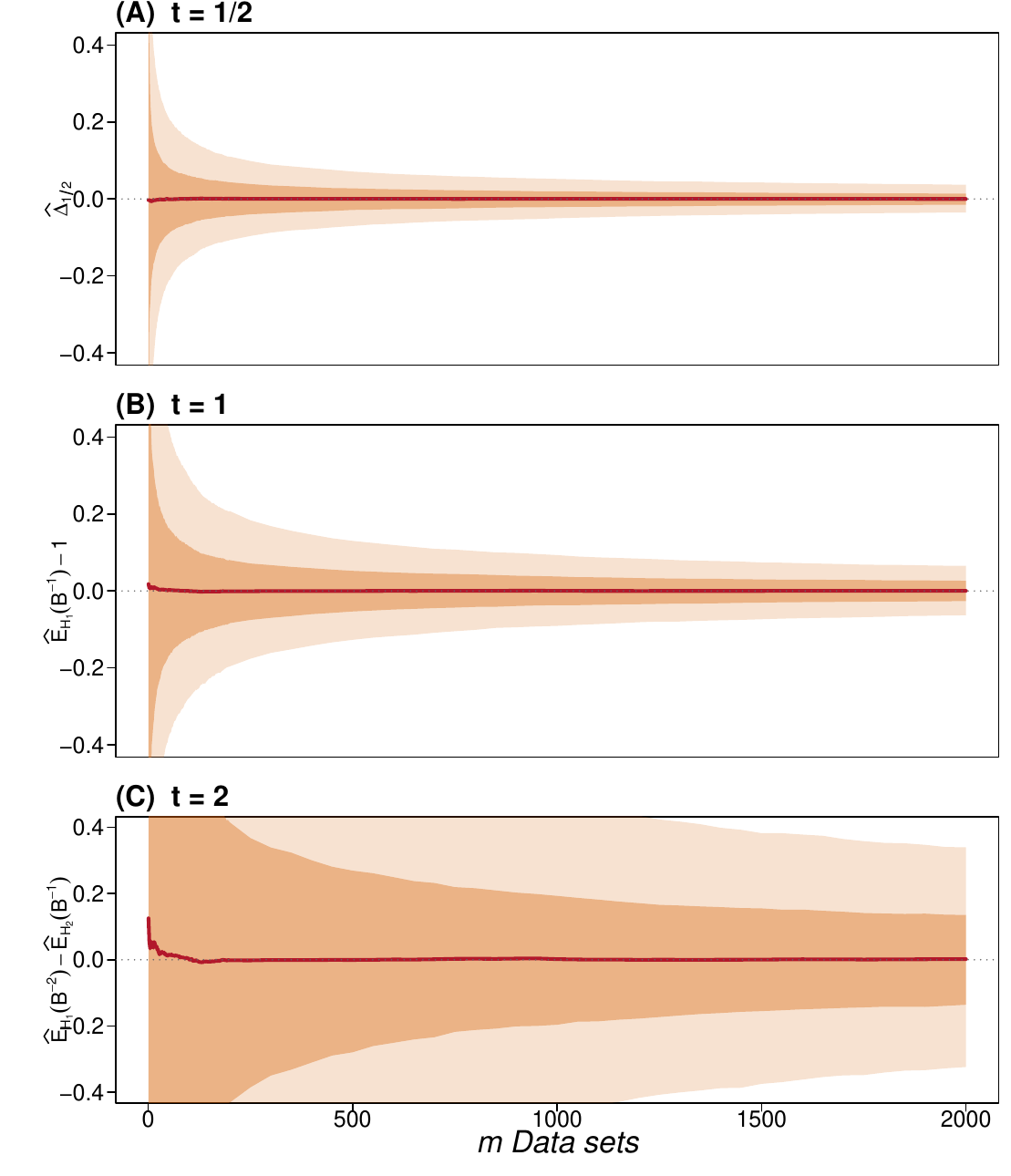}
  \caption{Simulation study~1A (binomial example, correctly specified; $n=50$): fan charts of running Monte Carlo
  discrepancies as a function of $m$ (number of simulated datasets per hypothesis) across $R=10{,}000$ independent runs.
  Panel (A) shows $\widehat\Delta_{1/2}(m)=\widehat{\E}_{\Htwo,\le m}[B^{1/2}]-\widehat{\E}_{\Hone,\le m}[B^{-1/2}]$.
  Panel (B) shows the reverse Turing discrepancy $\widehat{\E}_{\Hone,\le m}[B^{-1}]-1$.
  Panel (C) shows the reverse Good discrepancy $\widehat{\E}_{\Hone,\le m}[B^{-2}]-\widehat{\E}_{\Htwo,\le m}[B^{-1}]$.
  Solid curves are means; shaded bands are central 50\% (dark) and 90\% (light) intervals; the horizontal dotted line
  indicates $0$ (the exact identity).}
  \label{fig:sim1_outliers_sqrt}
\end{figure}

\FloatBarrier

\subsubsection{Simulation study 1B: Binomial example (simulator--evaluator mismatch)}\label{subsec:sim1B_binom}

To emulate a mild misspecification, we keep the \emph{Bayes factor evaluator} fixed at the intended model
$\Hone:\theta\sim\mathrm{Beta}(1,1)$, but generate synthetic data under $\Hone$ using a slightly different prior:
\[
  Y\mid \theta \sim \mathrm{Binomial}(n,\theta),\qquad
  \Htwo:\theta=\half,\qquad
  \text{simulator under }\Hone:\ \theta\sim \mathrm{Beta}(1.2,1.2).
\]
This simulator--evaluator mismatch violates the exact Turing--Good identities, and should therefore induce systematic,
persistent discrepancies.

Table~\ref{tab:sim1_binom_mismatch} reports Mean (SD) differences across $R=10{,}000$ runs.
The half-order discrepancy $\widehat\Delta_{1/2}$ is shifted away from zero with comparatively small variability,
indicating clear detectability at moderate Monte Carlo budgets.
As in the correctly specified case, forward integer-order quantities under $\Htwo$ can be highly unstable.

\input{figs/Table1_1B_rev2.tex}

Figure~\ref{fig:sim1_mismatch_discrepancy} visualizes the same phenomenon for $n=50$ via fan charts of running
discrepancies. The vertical blue line marks the first $m$ (on the plotting grid) at which the central 90\% band no longer
contains $0$, providing a simple visual ``detection time'' summary.

\begin{figure}[!htbp]
  \centering
  \includegraphics[width=0.86\linewidth]{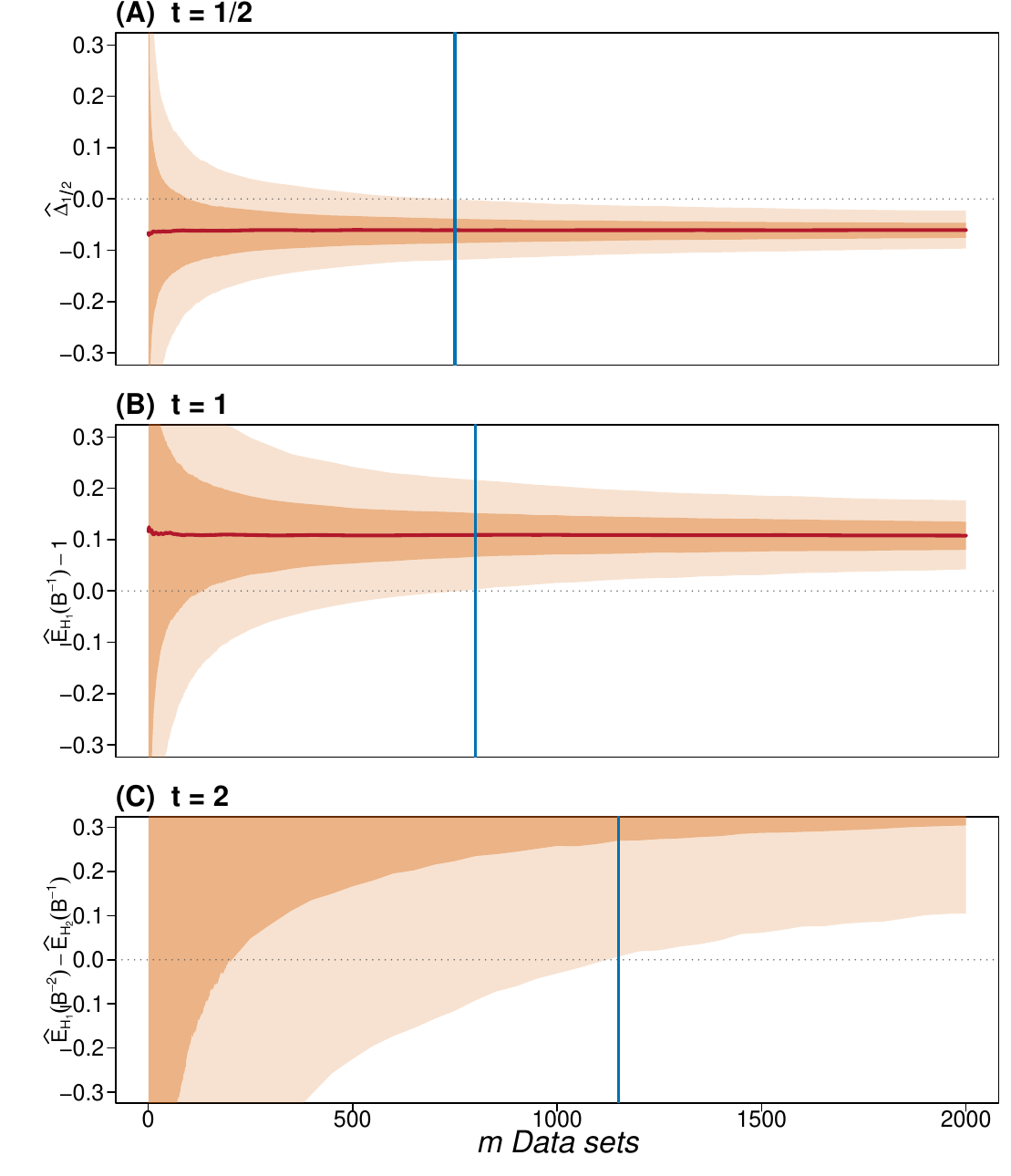}
  \caption{Simulation study~1B (binomial example; simulator--evaluator mismatch; $n=50$): fan charts of running Monte Carlo
  discrepancies across $R=10{,}000$ runs. Panels (A)--(C) match Figure~\ref{fig:sim1_outliers_sqrt}.
  The vertical blue line indicates the first $m$ at which the central 90\% band excludes $0$ (on the plotted $m$ grid).}
  \label{fig:sim1_mismatch_discrepancy}
\end{figure}

\FloatBarrier
\paragraph*{Takeaway.}
Across both workflows, the empirical behavior mirrors the theoretical message of Section~\ref{sec:TG-variance-halforder}:
Monte Carlo reliability of Good checks is controlled by shifted overlaps such as $I(2t)$ and can deteriorate rapidly away from
half order, often in a strongly asymmetric way across the two orientations.
The two-sided half-order check remains stable across model separations (as reflected by $\rho$) and retains sensitivity to
simulator--evaluator mismatches, including mild prior perturbations, without requiring a judgement about which model should be
treated as ``true'' or ``more complex.''

\subsection{Implications for practice and design}\label{sec:implications}


\paragraph*{(1) Default recommendation: a balanced two-sided half-order check.}
For Monte Carlo verification of the Turing--Good identity (Theorem~\ref{thm:TG}),
the half-order exponent $t=\half$ is uniquely stable in second moments:
it equalizes the two side-specific variances at $1-\rho^2$ (Corollary~\ref{cor:half-order-eq})
and uniquely minimizes the worst-side variance risk $R(t)$ (Theorem~\ref{thm:minimax}).
Operationally, the natural \emph{two-sided} estimator compares
$\bar B^{1/2}_{2n_2}$ and $\bar B^{-1/2}_{1n_1}$ using independent draws from $\mathcal{H}_2$ and $\mathcal{H}_1$,
respectively.
For a fixed total budget $N=n_1+n_2$ Bayes factor evaluations, Theorem~\ref{thm:unified} shows that the variance
$V_{(2)}(n_1,n_2)=\big(\frac1{n_1}+\frac1{n_2}\big)(1-\rho^2)$ is minimized by the balanced split
$n_1=n_2=N/2$ (see the convention on the balanced split preceding Theorem~\ref{thm:matched-cost}),
giving $V(2)=4(1-\rho^2)/N$. In particular, $\sup_{(p_1,p_2)\in\mathcal P_{\mathrm{ac}}}V_{(2)}^{\mathrm{bal}}\le 4/N$.
%

 In particular, $\sup V(2)\le 4/N$.

\paragraph*{(2) Small-overlap regime ($\rho\le\half$): two-sided half-order dominates the one-sided Turing check.}
Under the same Monte Carlo budget $N$, the one-sided Turing check based on $\bar B_N$ has variance
$V_{(1)}=(I(2)-1)/N$ (Theorem~\ref{thm:unified}).
Theorem~\ref{thm:unified}(ii) shows that, under the balanced split, whenever $\rho\le\half$,
\[
V_{(2)} \le V_{(1)}.
\]
Equivalently, the relative efficiency
\[
\mathrm{RE}:=\frac{V_{(1)}}{V_{(2)}}=\frac{I(2)-1}{4(1-\rho^2)}
\]
satisfies the universal lower bound $\mathrm{RE}\ge 1/(4\rho^2)$ (using $I(2)\ge \rho^{-2}$ from~\eqref{eq:ratio-right} at $t=1$).
Thus, at $\rho=\half$ the procedures are (at least) equally efficient, while at $\rho=0.25$ the half-order two-sided check is guaranteed to be $\ge 4\times$ as efficient at matched cost.

\paragraph*{(3) Reporting and interpreting the overlap $\rho$.}
The overlap $\rho=\int\sqrt{p_1p_2}\,dx$ is the Bhattacharyya coefficient (Hellinger affinity) between the two marginal likelihoods.
It is the \emph{only} model-dependent quantity governing the bounded two-sided half-order variance $V_{(2)}$.
For communication and planning, it is therefore useful to report either $\rho$ itself or $1-\rho^2$
(which equals the common half-order variance under either model; Corollary~\ref{cor:half-order-eq}).
Moreover, once $\bar B^{1/2}_{2n_2}$ and $\bar B^{-1/2}_{1n_1}$ agree within Monte Carlo error, either can be reported as an estimate of~$\rho$.

\paragraph*{(4) Cost-aware allocation and conservative thresholds.}
If the unit costs of generating a draw and evaluating $B$ differ under $\mathcal{H}_1$ and $\mathcal{H}_2$
(say $c_1$ and $c_2$), one can minimize $V_{(2)}(n_1,n_2)$ under a cost constraint $c_1n_1+c_2n_2=C$.
A Lagrange-multiplier calculation yields the optimal ratio
\[
n_1:n_2 \ \propto\ \frac{1}{\sqrt{c_1}}:\frac{1}{\sqrt{c_2}}.
\]
When $\rho$ is unknown, the worst-case bound $V_{(2)}\le 4/N$ (balanced split) implies a simple distribution-free Chebyshev guarantee:
\[
\mathbb{P}\!\left(\bigl|\bar B^{1/2}_{2n_2}-\bar B^{-1/2}_{1n_1}\bigr|\ge \varepsilon\right)
\le \frac{V_{(2)}}{\varepsilon^2}\le \frac{4}{N\varepsilon^2}.
\]
This can be used to choose conservative tolerance levels for the diagnostic before any pilot estimate of $\rho$ is available.

\paragraph*{(5) Standard error}
Let
\[
\Delta \;:=\; \widehat\rho_{\,2}-\widehat\rho_{\,1}
\]
be the observed discrepancy in the two-sided half-order check (Algorithm~\ref{alg:two-sided-halforder}; Section~3.3).
Under correct Bayes factor computation, $\E[\Delta]=0$ and, by Theorem~\ref{thm:unified},
\[
\Var(\Delta)=(1-\rho^2)\Bigl(\frac{1}{m_1}+\frac{1}{m_2}\Bigr).
\]
In practice, 

one can compute the sample variances
\[
s_2^2:=\frac{1}{m_2-1}\sum_{i=1}^{m_2}\Bigl\{B(X_{2i})^{1/2}-\widehat\rho_{\,2}\Bigr\}^2,
\qquad
s_1^2:=\frac{1}{m_1-1}\sum_{i=1}^{m_1}\Bigl\{B(X_{1i})^{-1/2}-\widehat\rho_{\,1}\Bigr\}^2,
\]
and the estimated standard error
\[
\widehat{\mathrm{se}}(\Delta):=\sqrt{\frac{s_2^2}{m_2}+\frac{s_1^2}{m_1}}.
\]
Then the studentized statistic $T:=\Delta/\widehat{\mathrm{se}}(\Delta)$ is approximately standard normal for moderate
$(m_1,m_2)$, suggesting the tolerance
\[
\varepsilon_{\mathrm{CLT}}(\alpha):=z_{1-\alpha/2}\,\widehat{\mathrm{se}}(\Delta),
\]
where $z_{1-\alpha/2}$ is the $(1-\alpha/2)$ quantile of the standard normal.

A conservative alternative that uses only the distribution-free bound
$1-\rho^2\le 1$ is to bound the \emph{true} standard error:
\[
  \mathrm{se}(\Delta)
  := \sqrt{\frac{\Var_{\mathcal{H}_2}(B^{1/2})}{m_2}+\frac{\Var_{\mathcal{H}_1}(B^{-1/2})}{m_1}}
  = \sqrt{1-\rho^2}\,\sqrt{\frac1{m_1}+\frac1{m_2}}
  \le \sqrt{\frac1{m_1}+\frac1{m_2}}.
\]
Define $\mathrm{se}_{\mathrm{wc}}(\Delta):=\sqrt{1/m_1+1/m_2}$ and use
$\varepsilon^{\mathrm{wc}}_{\mathrm{CLT}}(\alpha):=z_{1-\alpha/2}\,\mathrm{se}_{\mathrm{wc}}(\Delta)$.

\paragraph*{(6) When a one-sided check is unavoidable.}
If simulation is feasible from only one model, then a two-sided check is impossible and one must rely on a one-sided identity:
$t=1$ under $\mathcal{H}_2$ (Corollary~\ref{cor:Turing}) or, symmetrically, $t=0$ under $\mathcal{H}_1$
(since $I(0)=1$ and $\E_{\mathcal{H}_1}[B^{-1}]=1$ by Theorem~\ref{thm:TG}).
Theorem~\ref{thm:minimax}(ii) cautions that such one-sided checks can have unbounded variance for some mutually absolutely continuous pairs,
so in practice they should be accompanied by tail diagnostics for $B$ (or $\log B$), sufficiently large Monte Carlo budgets, and---whenever feasible---a return to the two-sided half-order design.

\section{Broader implications of the half-order exponent}\label{sec:broader-implications}

Section~\ref{sec:TG-variance-halforder} identifies the half-order exponent $t=\tfrac12$
as the \emph{unique} choice
that equalizes and uniformly bounds the Monte Carlo variances on the $\mathcal{H}_1$ and $\mathcal{H}_2$
sides (Theorem~\ref{thm:minimax}).  Although our motivating application was \emph{Good checks}
for Bayes factor implementations \citep{Sekulovski2024}, the underlying structure is more general:
many core problems in Bayesian computation reduce to estimating expectations of powers of a density ratio,
and their numerical stability is governed by the same overlap family $I(t)=\int p_1^t p_2^{1-t}\,d\mu$.

We begin with the most central computational instance: \emph{ratios of normalizing constants}.
This framing subsumes marginal likelihood ratios, Bayes factors, and (more broadly) free-energy differences.
We show that the half-order midpoint $\sqrt{p_1p_2}$ yields a canonical ``geometric bridge'' for ratio estimation,
together with a stably estimable overlap $\rho=I(\tfrac12)$ that directly controls the relative Monte Carlo error.
We then connect the same half-order structure to further settings (testing-theoretic bounds, distribution shift,
and robust evidence measures) in the subsequent subsections.

\subsection{Normalizing-constant ratios: the geometric bridge and overlap-based design}\label{subsec:normconst}

\paragraph*{Setup: ratios of normalizing constants as the computational core.}
Let $\tilde p_1$ and $\tilde p_2$ be nonnegative integrable functions on $(\mathcal X,\mathcal A,\mu)$ with
unknown normalizing constants $Z_j=\int \tilde p_j\,d\mu\in(0,\infty)$ and normalized densities
$p_j=\tilde p_j/Z_j$.  The fundamental target is the ratio
\[
r:=\frac{Z_1}{Z_2},
\]
which includes Bayes factors as a special case (e.g., $Z_j$ as marginal likelihood under model $j$).
In practice, $Z_j$ is typically intractable even when simulation from $p_j$ is feasible (e.g., by MCMC),
so computation hinges on identities that express $r$ as an expectation under one or both of $p_1,p_2$.

\paragraph*{Moment fragility of one-sided ratio estimators.}
Define the unnormalized ratio
\[
w(x):=\frac{\tilde p_1(x)}{\tilde p_2(x)}.
\]
A baseline identity is the one-sided importance-sampling representation
\[
\frac{Z_1}{Z_2}=\E_{p_2}\!\left[w(X)\right],
\]
so $\widehat r_{\mathrm{IS}}=\frac1n\sum_{i=1}^n w(X_i)$ with $X_i\sim p_2$ is unbiased.
However, its reliability is governed by $\Var_{p_2}(w)$ (equivalently $\E_{p_2}[w^2]$),
which can be enormous or infinite under mild tail mismatch or limited overlap.
This is a core reason why bridge sampling and related stabilized estimators are widely used
\citep{MengWong1996,GronauJMP2017}.

\paragraph*{A robust design desideratum.}
General bridge-sampling identities introduce a bridge function $h$ and use samples from both $p_1$ and $p_2$
to stabilize ratio estimation \citep{MengWong1996,GronauJMP2017}; related optimality ideas also appear
in free-energy computation \citep{Bennett1976,ShirtsChodera2008}.
Yet in the regimes that motivate bridge/path sampling (small overlap; heavy tails; high dimension),
many candidate transformations still deteriorate sharply, and the choice of tuning criteria
based on higher moments of density ratios can become ill-posed.
This motivates two practical questions:
\begin{enumerate}[(i)]
\item which ratio transformation provides a \emph{distribution-free} stability baseline?
\item which overlap quantity can be estimated reliably enough to guide reference choice and bridging 
design?
\end{enumerate}

\paragraph*{Half order as the uniquely worst-case-stable midpoint.}
Section~\ref{sec:TG-variance-halforder} answers (i)--(ii) in a unified manner.
Applied to the normalized density ratio $B=p_1/p_2$, the Turing--Good identity (Theorem~\ref{thm:TG}) gives
$\E_{\mathcal{H}_2}[B^t]=\E_{\mathcal{H}_1}[B^{t-1}]=I(t)$ whenever these expectations are finite.
Crucially, Theorem~\ref{thm:minimax} shows that $t=\tfrac12$ is the \emph{unique} exponent that
equalizes and uniformly bounds the worst-side variance across all mutually absolutely continuous pairs.
Translating back to normalizing-constant ratios yields a canonical ``geometric bridge''.

\begin{proposition}[Half-order bridge identity for normalizing-constant ratios]
\label{prop:halforder-bridge}
In the above setting, assume $p_1$ and $p_2$ are mutually absolutely continuous.
Define $w(x):=\tilde p_1(x)/\tilde p_2(x)$, $r:=Z_1/Z_2$,
and the half-order overlap
\[
\rho:=\int \sqrt{p_1(x)p_2(x)}\,d\mu \in (0,1].
\]
Then
\[
\E_{p_2}\!\left[w(X)^{1/2}\right]=r^{1/2}\rho,
\qquad
\E_{p_1}\!\left[w(X)^{-1/2}\right]=r^{-1/2}\rho,
\]
and hence the ratio admits the \emph{half-order bridge identity}
\begin{equation}\label{eq:halforder-bridge}
\frac{Z_1}{Z_2}
=\frac{\E_{p_2}\!\left[w(X)^{1/2}\right]}{\E_{p_1}\!\left[w(X)^{-1/2}\right]} .
\end{equation}
\end{proposition}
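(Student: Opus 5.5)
The plan is to reduce everything to the normalized Bayes factor $B=p_1/p_2$ and then invoke Theorem~\ref{thm:TG} at $t=\half$. On the set where $\tilde p_2>0$ (which has full $P_1$- and $P_2$-measure by mutual absolute continuity), we have $\tilde p_j=Z_jp_j$. Hence
\[
w(x)=\frac{Z_1p_1(x)}{Z_2p_2(x)}=r\,B(x),
\]
$P_1$- and $P_2$-almost surely. Since $r\in(0,\infty)$, this also gives $w^{\pm1/2}=r^{\pm1/2}B^{\pm1/2}$ almost surely under both laws.

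The first display follows by taking expectations. We get
\[
\E_{p_2}[w^{1/2}]=r^{1/2}\E_{\mathcal H_2}[B^{1/2}]
\quad\text{and}\quad
\E_{p_1}[w^{-1/2}]=r^{-1/2}\E_{\mathcal H_1}[B^{-1/2}].
\]
By Theorem~\ref{thm:TG} with $t=\half$, both expectations of $B$ equal $I(\half)=\rho$. This application is legitimate because $\half\in[0,1]\subset D$ by Lemma~\ref{lem:basic-bounds}(ii), so $\rho$ is finite. Alternatively, one can verify each identity directly by writing the expectation as $\int (Z_1p_1)^{1/2}(Z_2p_2)^{-1/2}p_2\,d\mu$, and symmetrically for the other.

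Next, $\rho>0$ holds by the Remark following Lemma~\ref{lem:basic-bounds}, since $\sqrt{p_1p_2}>0$ on a set of positive $P_1$-measure. The denominator $r^{-1/2}\rho$ is therefore strictly positive and finite. Dividing the two identities gives $r^{1/2}\rho/(r^{-1/2}\rho)=r$, which is exactly \eqref{eq:halforder-bridge}.

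The only point needing care is the null-set bookkeeping: where $\tilde p_2=0$ or $\tilde p_1=0$, $w$ is undefined or has a zero or infinite power. I will handle this by fixing an arbitrary value of $w$ on $\{\tilde p_2=0\}$, as the paper does for $B$, and noting that under \eqref{eq:mutual-ac} the sets $\{p_1=0\}$ and $\{p_2=0\}$ coincide $\mu$-a.e. and are $P_1$- and $P_2$-null. This makes $w^{-1/2}$ finite $P_1$-a.s. and $w^{1/2}$ finite $P_2$-a.s., so the expectations are well defined. Beyond this, the argument is a direct rescaling of Theorem~\ref{thm:TG}.
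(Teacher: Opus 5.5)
Your proof is correct and is essentially the paper's argument. The paper computes $\E_{p_2}[w^{1/2}]=Z_2^{-1}\int\sqrt{\tilde p_1\tilde p_2}\,d\mu$ directly and factors out $\sqrt{Z_1Z_2}$, which is the same calculation as your rescaling $w=rB$ followed by Theorem~\ref{thm:TG} at $t=\half$; your explicit checks that $\rho>0$ (so the division is legitimate) and that $\{\tilde p_2=0\}$ is null under both laws are details the paper leaves implicit.
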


\begin{proof}
By definition,
$\E_{p_2}[w^{1/2}]
=Z_2^{-1}\int \tilde p_2(\tilde p_1/\tilde p_2)^{1/2}\,d\mu
=Z_2^{-1}\int \sqrt{\tilde p_1\tilde p_2}\,d\mu$.
Since $\sqrt{\tilde p_1\tilde p_2}=\sqrt{Z_1Z_2}\,\sqrt{p_1p_2}$, this equals
$\sqrt{Z_1/Z_2}\int\sqrt{p_1p_2}\,d\mu=r^{1/2}\rho$.
The $p_1$ identity is analogous, and taking the ratio yields~\eqref{eq:halforder-bridge}.
\end{proof}

\begin{proposition}[Second-moment stability and an overlap-controlled variance formula]
\label{prop:halforder-bridge-stability}
Under the conditions of Proposition~\ref{prop:halforder-bridge},
\[
\Var_{p_2}\!\left(w(X)^{1/2}\right)=r(1-\rho^2),
\qquad
\Var_{p_1}\!\left(w(X)^{-1/2}\right)=r^{-1}(1-\rho^2).
\]
In particular, the half-order building blocks have finite variance for \emph{all} mutually absolutely continuous pairs.
Moreover, their coefficients of variation coincide:
\[
\frac{\Var_{p_2}(w^{1/2})}{\E_{p_2}[w^{1/2}]^2}
=
\frac{\Var_{p_1}(w^{-1/2})}{\E_{p_1}[w^{-1/2}]^2}
=
\rho^{-2}-1.
\]
With independent samples $X_{21},\dots,X_{2m_2}\stackrel{\mathrm{iid}}{\sim}p_2$ and
$X_{11},\dots,X_{1m_1}\stackrel{\mathrm{iid}}{\sim}p_1$, define
\[
\widehat r_{1/2}
:=
\frac{\widehat a_2}{\widehat a_1},
\qquad
\widehat a_2:=\frac{1}{m_2}\sum_{i=1}^{m_2} w(X_{2i})^{1/2},
\quad
\widehat a_1:=\frac{1}{m_1}\sum_{i=1}^{m_1} w(X_{1i})^{-1/2}.
\]
Then, as $m_1,m_2\to\infty$,
\[
\Var(\widehat r_{1/2})
=
r^2\,\frac{1-\rho^2}{\rho^2}\Bigl(\frac{1}{m_1}+\frac{1}{m_2}\Bigr)
+o\!\left(\frac{1}{m_1}+\frac{1}{m_2}\right).
\]
\end{proposition}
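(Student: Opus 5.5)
The plan is to reduce everything to the normalized ratio $B=p_1/p_2$ and then reuse Corollary~\ref{cor:half-order-eq}. Since $\tilde p_j=Z_jp_j$, we have $w=rB$ pointwise, wherever $p_2>0$. It follows that
\[
w^{1/2}=r^{1/2}B^{1/2},\qquad w^{-1/2}=r^{-1/2}B^{-1/2}.
\]
Scaling then gives
\[
\Var_{p_2}(w^{1/2})=r\,\Var_{\mathcal H_2}(B^{1/2}),\qquad \Var_{p_1}(w^{-1/2})=r^{-1}\Var_{\mathcal H_1}(B^{-1/2}).
\]
Corollary~\ref{cor:half-order-eq} identifies both variances on the right as $1-\rho^2$. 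Finiteness for all pairs in $\mathcal P_{\mathrm{ac}}$ follows because $1-\rho^2\le 1$. For the coefficients of variation, I divide by the squared means from Proposition~\ref{prop:halforder-bridge}, namely $r\rho^2$ and $r^{-1}\rho^2$. Both ratios reduce to $(1-\rho^2)/\rho^2=\rho^{-2}-1$, which is well defined since $\rho>0$ under mutual absolute continuity.

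For the variance of $\widehat r_{1/2}=\widehat a_2/\widehat a_1$, I will use the delta method for the smooth map $g(a_2,a_1)=a_2/a_1$. It is evaluated at the mean point
\[
(\alpha_2,\alpha_1):=(r^{1/2}\rho,\ r^{-1/2}\rho),\qquad \alpha_1>0.
\]
The estimators $\widehat a_2$ and $\widehat a_1$ are independent, with variances $r(1-\rho^2)/m_2$ and $r^{-1}(1-\rho^2)/m_1$. The gradient of $g$ is $(1/\alpha_1,\,-\alpha_2/\alpha_1^2)$. The first-order variance is therefore
\[
\frac{r(1-\rho^2)}{m_2\alpha_1^2}+\frac{\alpha_2^2\,r^{-1}(1-\rho^2)}{m_1\alpha_1^4}.
\]
Substituting $\alpha_1^2=\rho^2/r$ and $\alpha_2^2/\alpha_1^4=r^3/\rho^2$, both terms become $r^2(1-\rho^2)/\rho^2$ times $1/m_2$ and $1/m_1$ respectively. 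This matches the displayed leading term.

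The main obstacle is making the ``$\Var(\widehat r_{1/2})$'' statement rigorous, as opposed to the asymptotic variance of the limiting normal law. The issue is that $\widehat a_1$ can be close to $0$, and then $1/\widehat a_1$ need not have finite moments in general. My first attempt will be to state and prove the result as an asymptotic-variance statement. Write
\[
\widehat r_{1/2}-r=\frac{(\widehat a_2-\alpha_2)-r(\widehat a_1-\alpha_1)}{\widehat a_1}.
\]
Since $\widehat a_1\to\alpha_1$ a.s., Slutsky's lemma and the two independent CLTs (both summands have finite variance) give asymptotic normality with the variance above, as $m_1,m_2\to\infty$ at any relative rate.

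If an actual variance expansion is required, I would proceed as follows. Split on the event $\{\widehat a_1\ge\alpha_1/2\}$. On this event, expand $1/\widehat a_1$ to second order around $\alpha_1$ and bound the remainder using only second moments, which gives $o(1/m_1+1/m_2)$. Control the complementary event by Chebyshev at rate $O(1/m_1)$. For this last step an extra moment (or truncation) assumption on $\widehat r_{1/2}$ might be needed, so I would flag that the expansion is to be read in the delta-method sense.
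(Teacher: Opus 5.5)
Your proposal is correct and follows the paper's own route. You write $w=rB$, read off both variances from Corollary~\ref{cor:half-order-eq}, and divide by the squared means from Proposition~\ref{prop:halforder-bridge}; you then get the $\widehat r_{1/2}$ expansion from the CLT for $(\widehat a_2,\widehat a_1)$ plus a first-order delta method for $(a_2,a_1)\mapsto a_2/a_1$, and your explicit gradient computation just fills in algebra the paper leaves implicit.

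Your caveat is well founded, and the paper's proof does not address it. CLT plus delta method only delivers the asymptotic (limiting-normal) variance. Without an extra condition on the lower tail of $w^{-1/2}$ under $p_1$, the literal $\Var(\widehat r_{1/2})$ can fail to exist for every $m_1$. For example, take $p_2$ uniform on $(0,1)$ and $p_1(x)\propto 1/\{x\log^2(1/x)\}$ near $0$. Then $\mathbb{P}_{p_1}(w^{-1/2}\le s)\asymp 1/\log(1/s)$, which forces $\E[1/\widehat a_1]=\infty$. So reading the display in the delta-method sense, as you propose, is the right interpretation.
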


\begin{proof}
Write $B=p_1/p_2$ so that $w=rB$. Then $w^{1/2}=r^{1/2}B^{1/2}$ and $w^{-1/2}=r^{-1/2}B^{-1/2}$.
Under $p_2$, $\E[B^{1/2}]=\rho$ and $\Var(B^{1/2})=1-\rho^2$ by Corollary~\ref{cor:half-order-eq},
so $\Var_{p_2}(w^{1/2})=r\,\Var_{p_2}(B^{1/2})=r(1-\rho^2)$; the $p_1$ side is identical.
The coefficient-of-variation identity follows by dividing by the squared means from Proposition~\ref{prop:halforder-bridge}.
Finally, the variance expansion for $\widehat r_{1/2}$ follows from the CLT for $(\widehat a_2,\widehat a_1)$ and
a first-order delta method for $(a_2,a_1)\mapsto a_2/a_1$.
\end{proof}

\paragraph*{Remark (Overlap estimation at essentially no additional cost).}
Proposition~\ref{prop:halforder-bridge} implies the product identity
$\E_{p_2}[w^{1/2}]\,\E_{p_1}[w^{-1/2}]=\rho^2$.
Therefore, if $\widehat a_2$ and $\widehat a_1$ denote the sample means of $w^{1/2}$ under $p_2$ and
$w^{-1/2}$ under $p_1$ as in Proposition~\ref{prop:halforder-bridge-stability}, then
\[
\widehat\rho^2:=\widehat a_2\,\widehat a_1,
\qquad
\widehat\rho:=(\widehat a_2\,\widehat a_1)^{1/2},
\]
is available alongside the ratio estimator $\widehat r_{1/2}=\widehat a_2/\widehat a_1$ at essentially no additional cost.
Because each half-order summand has finite variance for all mutually absolutely continuous pairs,
$\widehat\rho$ provides a stable pilot diagnostic of overlap that can guide reference choice and the refinement of
intermediate bridging
schedules (Section~\ref{subsec:is-degeneracy}).

\paragraph*{Remark (Relation to bridge sampling and what is distinctive here).}
Identity~\eqref{eq:halforder-bridge} is a geometric special case of bridge sampling
\citep{MengWong1996,GronauJMP2017}.
The point emphasized here is not the existence of a bridge identity per se,
but that $t=\tfrac12$ inherits the \emph{distribution-free} worst-case second-moment guarantee from
Section~\ref{sec:TG-variance-halforder}: regardless of tail behavior (as long as mutual absolute continuity holds),
the half-order summands admit finite variance and their relative variability is governed by $\rho$ alone.
This provides a conservative default (and a reliable pilot quantity) even when higher-moment criteria are unstable.

\paragraph*{Design implication for marginal likelihood computation.}
In marginal likelihood estimation one often compares $\tilde p(\theta)=p(y\mid\theta)\pi(\theta)$
to a tractable reference $\tilde g(\theta)$ with known normalizing constant,
reducing the problem to estimating $Z_{\tilde p}/Z_{\tilde g}$ via bridge/path sampling
\citep{MengWong1996,GronauJMP2017,Gronau2017}.
Propositions~\ref{prop:halforder-bridge}--\ref{prop:halforder-bridge-stability} yield a practical principle:
\emph{use the geometric/half-order bridge as a robust baseline when overlap or tails are uncertain,
and use the half-order overlap $\rho$---estimable with bounded variance under either model---as a stable tuning target
when constructing references or intermediate bridging
 schedules.}
Concretely, when a difficult ratio is decomposed into a product of easier ratios, a natural operational goal is
to ensure that the pairwise overlaps $\rho$ between consecutive states are not too small.
This overlap-guided viewpoint is pursued further for IS weight diagnostics in Section~\ref{subsec:is-degeneracy}.


\paragraph*{Simulation study 3: Half-order bridge vs.\ one-sided importance sampling under tail mismatch.}
To complement Propositions~\ref{prop:halforder-bridge}--\ref{prop:halforder-bridge-stability}, we give a minimal
toy family in which \emph{either} one-sided normalizing-constant estimator can be theoretically ill-posed
(infinite variance) depending on the direction of tail mismatch, while the \emph{half-order} bridge estimator
remains stable in both directions.

\emph{Model pair and the target ratio.}
On $(0,1)$, let $\tilde p_2(x)\equiv 1$ so that $Z_2=\int_0^1 \tilde p_2(x)\,dx=1$ and $p_2(x)\equiv 1$.
Let $p_1$ be the $\mathrm{Beta}(a,1)$ density, $p_1(x)=a\,x^{a-1}$, and set $\tilde p_1(x)=r\,p_1(x)$ so that
$Z_1=\int_0^1 \tilde p_1(x)\,dx=r$. Then the normalizing-constant ratio is $Z_1/Z_2=r$ and the unnormalized ratio is
\[
w(x):=\frac{\tilde p_1(x)}{\tilde p_2(x)} = r\,p_1(x)=r\,a\,x^{a-1}.
\]
(For the Monte Carlo experiments below we set $r=1$ without loss of generality, since we report relative error.)

\emph{Estimators and cost matching.}
We compare the half-order bridge estimator
\[
\widehat r_{1/2}
:=\frac{\widehat a_2}{\widehat a_1},
\quad
\widehat a_2:=\frac{1}{n_2}\sum_{i=1}^{n_2} w(X_{2i})^{1/2},
\quad
\widehat a_1:=\frac{1}{n_1}\sum_{j=1}^{n_1} w(X_{1j})^{-1/2},
\]
with independent samples $X_{2i}\stackrel{\mathrm{iid}}{\sim}p_2$ and $X_{1j}\stackrel{\mathrm{iid}}{\sim}p_1$,
against the appropriate one-sided estimator in each stress test:
\[
\widehat r_{\mathrm{F}}
:=\frac{1}{N}\sum_{i=1}^{N} w(X_{2i}), \qquad X_{2i}\stackrel{\mathrm{iid}}{\sim}p_2
\quad\text{(forward one-sided IS; $t=1$)},
\]
and
\[
\widehat r_{\mathrm{R}}
:=\left(\frac{1}{N}\sum_{i=1}^{N} w(X_{1i})^{-1}\right)^{-1}, \qquad X_{1i}\stackrel{\mathrm{iid}}{\sim}p_1
\quad\text{(reverse one-sided IS; based on $t=0$).}
\]
To make comparisons cost-matched, we fix the total number of ratio evaluations to $N_{\text{total}}=2000$.
For a one-sided estimator we take $N=2000$ draws from its sampling distribution.
For the half-order bridge we split $(n_1,n_2)=(1000,1000)$.

\emph{Why this family is diagnostic: forward and reverse breakdown.}
The forward estimator $\widehat r_{\mathrm{F}}$ is unbiased ($\E_{p_2}[w]=r$) but its variance can be infinite:
\[
\E_{p_2}[w^2]
= r^2 \int_0^1 p_1(x)^2\,dx
= r^2 a^2 \int_0^1 x^{2a-2}\,dx
=
\infty
\quad\text{whenever } a\le \tfrac12 .
\]
Conversely, the inverse of the reverse estimator, $\hat{r}_R^{-1}$, is unbiased for $r^{-1}$ (since $\E_{p_1}[w^{-1}]=r^{-1}$)
, but it can be ill-posed
because
\[
\E_{p_1}[w^{-2}]
= r^{-2}\int_0^1 p_1(x)^{-1}\,dx
= \frac{r^{-2}}{a}\int_0^1 x^{1-a}\,dx
=
\infty
\quad\text{whenever } a\ge 2 .
\]
Thus, one-sided ratio estimation can fail on \emph{either side}, depending on the direction of tail mismatch.

\emph{Half-order prediction via overlap.}
In this $\mathrm{Beta}(a,1)$ vs.\ $\mathrm{Unif}(0,1)$ family, the half-order overlap is available in closed form:
\[
\rho(a)=\int_0^1 \sqrt{p_1(x)p_2(x)}\,dx
= \int_0^1 \sqrt{a}\,x^{(a-1)/2}\,dx
=\frac{2\sqrt{a}}{a+1},
\qquad
\rho(a)^2=\frac{4a}{(a+1)^2}.
\]
Proposition~\ref{prop:halforder-bridge-stability} yields the asymptotic relative standard deviation
\[
\mathrm{RSD}(\widehat r_{1/2})
\approx
\sqrt{\frac{1-\rho^2}{\rho^2}\Bigl(\frac{1}{n_1}+\frac{1}{n_2}\Bigr)}
=
\sqrt{\bigl(\rho^{-2}-1\bigr)\Bigl(\frac{1}{n_1}+\frac{1}{n_2}\Bigr)}.
\]
With $n_1=n_2=1000$, this predicts $\mathrm{RSD}(\widehat r_{1/2})\approx 1.58\%$ for $a=\tfrac12$
and $\approx 2.58\%$ for $a=3$.

\begin{figure}[!htbp]
\centering
\includegraphics[width=0.98\linewidth]{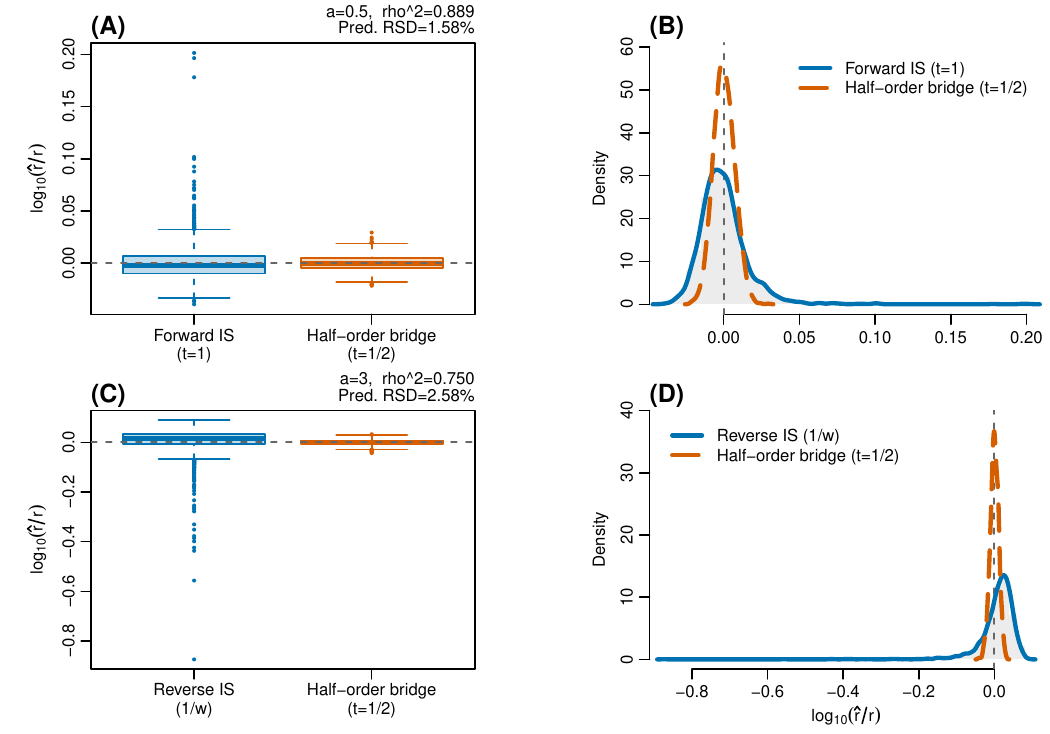}
\caption{Toy normalizing-constant ratio estimation under tail mismatch in the $\mathrm{Beta}(a,1)$ vs.\ $\mathrm{Unif}(0,1)$ family (Simulation study~3).
We report $\log_{10}(\widehat r/r)$ over $R=2000$ independent repetitions at matched computational cost $N_{\text{total}}=2000$.
\textbf{Top row ($a=\tfrac12$): forward breakdown.}
Panels (A)--(B) compare the forward one-sided estimator $\widehat r_{\mathrm{F}}=\frac1N\sum_{i=1}^N w(X_{2i})$ ($X_{2i}\sim p_2$; $N=2000$)
to the half-order bridge $\widehat r_{1/2}$ ($n_1=n_2=1000$).
Because $\E_{p_2}[w^2]=\infty$ at $a=\tfrac12$, the forward estimator exhibits a pronounced right tail and many outliers.
\textbf{Bottom row ($a=3$): reverse breakdown.}
Panels (C)--(D) compare the reverse one-sided estimator $\widehat r_{\mathrm{R}}=\{\frac1N\sum_{i=1}^N w(X_{1i})^{-1}\}^{-1}$ ($X_{1i}\sim p_1$; $N=2000$)
to the same half-order bridge estimator ($n_1=n_2=1000$).
Because $\E_{p_1}[w^{-2}]=\infty$ for $a\ge 2$, the reverse estimator develops a heavy left tail (extreme underestimation on the log scale).
In all panels, the half-order bridge remains tightly concentrated around $0$.
Panel headers report the closed-form overlap $\rho^2=4a/(a+1)^2$ and the corresponding asymptotic RSD prediction from
Proposition~\ref{prop:halforder-bridge-stability}.}
\label{fig:toy-halforder-bridge}
\end{figure}

\emph{Takeaway.}
This toy family isolates the practical content of the half-order theory for normalizing-constant ratios:
one-sided ratio estimators can be dominated by rare extreme weights in \emph{either direction} of tail mismatch,
making CLT-based error assessment ill-founded, whereas the geometric/half-order bridge provides a conservative default
whose Monte Carlo building blocks admit distribution-free second-moment control and whose relative error is governed by
the stably estimable overlap $\rho$.

\subsection{Importance sampling view: degeneracy diagnostics without fragile second moments}\label{subsec:is-degeneracy}

\paragraph*{Importance sampling as ratio estimation and moment fragility.}
Importance sampling (IS) approximates expectations under a target density $p_1$ by reweighting draws from
a proposal density $p_2$.  Its numerical behavior is controlled by the density ratio
\[
W(x):=\frac{p_1(x)}{p_2(x)},
\]
which is the same ratio object that appears in the Turing--Good identities of
Section~\ref{sec:TG-variance-halforder}.
A key practical issue is that many widely used diagnostics and tuning criteria for IS
(e.g., squared-weight concentration summaries, coefficients of variation, and other moment-based rules)
are implicitly functions of second or higher moments of $W$ (or power transforms $W^t$).
These moments can be extremely unstable or even infinite under mild tail mismatch, which makes such criteria
ill-posed precisely in the regimes where they are most needed.

Section~\ref{sec:TG-variance-halforder} implies a simple but consequential principle:
among power transforms of the ratio, the half-order point $t=\tfrac12$ is the \emph{unique} choice that
admits a distribution-free worst-side second-moment guarantee.
This motivates using the half-order overlap $\rho =\int \sqrt{p_1(x)p_2(x)}\,d\mu$ 
as a conservative, stably estimable summary of weight concentration, and as an anchor for proposal and
bridging 
design.

\begin{proposition}[Half-order overlap and coefficient of variation as a stable degeneracy summary]
\label{prop:halforder-ess}
Let $p_1$ and $p_2$ be mutually absolutely continuous and define the IS weight
$W(X):=p_1(X)/p_2(X)$ together with its half-order transform
\[
U(X):=W(X)^{1/2}=\sqrt{\frac{p_1(X)}{p_2(X)}},
\qquad
\rho:=\int \sqrt{p_1(x)p_2(x)}\,d\mu.
\]
Then, under $p_2$,
\[
\E_{p_2}[U]=\rho,
\qquad
\Var_{p_2}(U)=1-\rho^2\le 1,
\]
and symmetrically, under $p_1$, $\E_{p_1}[W^{-1/2}]=\rho$ and $\Var_{p_1}(W^{-1/2})=1-\rho^2$.
Consequently, the half-order squared coefficient of variation is
\[
\CV_{1/2}^2
:=\frac{\Var_{p_2}(U)}{\E_{p_2}[U]^2}
=\frac{1-\rho^2}{\rho^2}
=\rho^{-2}-1.
\]
In particular, $\rho^2=1/(1+\CV_{1/2}^2)$ provides an interpretable overlap index in $(0,1]$
that remains well-defined and stably estimable under either model.
\end{proposition}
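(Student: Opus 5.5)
The plan is to identify the IS weight $W$ with the Bayes factor $B=p_1/p_2$ of Section~\ref{sec:TG-variance-halforder}. Mutual absolute continuity of $p_1$ and $p_2$ is assumed, so every earlier result applies verbatim with $U=B^{1/2}$.

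First, the mean identities come from Theorem~\ref{thm:TG} at $t=\half$, which is admissible because $\half\in[0,1]\subset D$ by Lemma~\ref{lem:basic-bounds}(ii). That theorem gives
\[
\E_{p_2}[W^{1/2}]=\E_{p_1}[W^{-1/2}]=I(\half)=\rho .
\]
Second, the variance identities are Corollary~\ref{cor:half-order-eq}, equivalently Lemma~\ref{lem:var-identities} at $t=\half$ combined with $I(0)=I(1)=1$. This yields $\Var_{p_2}(U)=\Var_{p_1}(W^{-1/2})=1-\rho^2$, and the bound $1-\rho^2\le 1$ holds because $\rho>0$.

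Third, the coefficient of variation follows by direct division:
\[
\CV_{1/2}^2=\frac{1-\rho^2}{\rho^2}=\rho^{-2}-1 .
\]
This is legitimate because $\rho\in(0,1]$ is strictly positive under mutual absolute continuity, as noted in the remark after Lemma~\ref{lem:basic-bounds}. Solving for $\rho$ gives $1+\CV_{1/2}^2=\rho^{-2}$, hence $\rho^2=1/(1+\CV_{1/2}^2)\in(0,1]$.

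For the closing claim of being \emph{stably estimable under either model}, I will argue that the sample means of $W^{1/2}$ under $p_2$ and of $W^{-1/2}$ under $p_1$ both have finite per-draw variance $1-\rho^2\le1$. Consequently the usual $O(m^{-1/2})$ Monte Carlo error and the CLT apply uniformly over $\mathcal P_{\mathrm{ac}}$, in line with Theorem~\ref{thm:minimax}.

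There is no real obstacle: everything is a specialization of earlier results. The only points needing care are confirming $\rho>0$, so the coefficient of variation is well defined, and stating precisely what "stably estimable" means, namely the uniform finite-variance guarantee.
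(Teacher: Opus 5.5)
Your proposal is correct and follows the paper's proof: you identify $W$ with $B$, use Theorem~\ref{thm:TG} at $t=\tfrac12$ for the means, Corollary~\ref{cor:half-order-eq} for the variances, and algebra (with $\rho>0$) for the coefficient of variation. One small caution on your reading of ``stably estimable'': the bound $1-\rho^2\le 1$ gives a uniform Chebyshev/mean-squared-error guarantee over $\mathcal P_{\mathrm{ac}}$, but not a CLT that holds uniformly over that class, because the skewness of $W^{1/2}$ is not uniformly controlled (for example, the two-point pairs $p_1=(1-\delta,\delta)$, $p_2=(\epsilon,1-\epsilon)$ with $\epsilon\ll 1/m$).
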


\begin{proof}
The mean identities are the half-order instance of the Turing--Good identity (Theorem~\ref{thm:TG})
applied to $B=W$; the variance statements are Corollary~\ref{cor:half-order-eq}. The CV identity follows by algebra.
\end{proof}

\paragraph*{Remark (Estimating $\rho$ and $\rho^2$ from one or two sides).}
Proposition~\ref{prop:halforder-ess} directly yields bounded-variance estimators of $\rho$.
If $X_1,\dots,X_m\stackrel{\mathrm{iid}}{\sim}p_2$ and the normalized ratio $W=p_1/p_2$ is available, then
\[
\widehat\rho_{(2)}:=\frac{1}{m}\sum_{i=1}^m \sqrt{W(X_i)}
\qquad\text{satisfies}\qquad
\Var(\widehat\rho_{(2)})=\frac{1-\rho^2}{m}\le \frac{1}{m}.
\]
Likewise, if $Y_1,\dots,Y_m\stackrel{\mathrm{iid}}{\sim}p_1$, then
\[
\widehat\rho_{(1)}:=\frac{1}{m}\sum_{i=1}^m W(Y_i)^{-1/2}
\qquad\text{satisfies}\qquad
\Var(\widehat\rho_{(1)})=\frac{1-\rho^2}{m}\le \frac{1}{m}.
\]
In many Bayesian applications $p_1$ is only known up to a normalizing constant, so one instead works with an
unnormalized ratio $w=\tilde p_1/\tilde p_2 = r\,W$.
In that case, the half-order building blocks from Section~\ref{subsec:normconst} still allow stable overlap estimation:
by Proposition~\ref{prop:halforder-bridge},
$\E_{p_2}[w^{1/2}]=r^{1/2}\rho$ and $\E_{p_1}[w^{-1/2}]=r^{-1/2}\rho$, hence
\begin{equation}\label{eq:rho2-product}
\rho^2
=\E_{p_2}\!\left[w^{1/2}\right]\;\E_{p_1}\!\left[w^{-1/2}\right].
\end{equation}
Accordingly, with samples from both sides one may estimate $\rho^2$ by the product of the corresponding sample means.
In particular, if $\widehat a_2$ and $\widehat a_1$ denote the sample means of $w^{1/2}$ under $p_2$ and $w^{-1/2}$ under $p_1$
(as in Proposition~\ref{prop:halforder-bridge-stability}), then
\[
\widehat\rho^2:=\widehat a_2\,\widehat a_1,
\qquad
\widehat\rho:=(\widehat a_2\,\widehat a_1)^{1/2},
\]
is available alongside $\widehat r_{1/2}=\widehat a_2/\widehat a_1$ at essentially no additional cost
(see also the overlap remark in Section~\ref{subsec:normconst}).

\begin{proposition}[Uniqueness: power-moment degeneracy diagnostics can be ill-posed away from $t=\tfrac12$]
\label{prop:is-fragility}
Fix $t\neq \tfrac12$ and consider the power-weight transform $W^t$.
There exist mutually absolutely continuous pairs $(p_1,p_2)$ for which
$\Var_{p_2}(W^t)=\infty$ (if $t>\tfrac12$) or $\Var_{p_1}(W^{t-1})=\infty$ (if $t<\tfrac12$).
In particular, diagnostics and tuning rules that require finite second moments of $W^t$
can be ill-posed away from the half-order point even when $p_1$ and $p_2$ are valid IS pairs.
\end{proposition}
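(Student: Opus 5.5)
The plan is to obtain this statement as a direct specialization of Theorem~\ref{thm:minimax}(ii), identifying the IS weight $W=p_1/p_2$ with the Bayes factor $B$. With this identification, Lemma~\ref{lem:var-identities} gives
\[
\Var_{p_2}(W^t)=I(2t)-I(t)^2,\qquad \Var_{p_1}(W^{t-1})=I(2t-1)-I(t)^2,
\]
read in $[0,\infty]$ under the extended-variance convention. Hence it suffices, for each fixed $t\neq\tfrac12$, to exhibit a mutually absolutely continuous pair with $I(t)<\infty$ and either $I(2t)=\infty$ (case $t>\tfrac12$) or $I(2t-1)=\infty$ (case $t<\tfrac12$). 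The condition $I(t)<\infty$ matters: it makes the means $\E_{p_2}[W^t]=\E_{p_1}[W^{t-1}]=I(t)$ finite, so the infinite variance comes from an infinite second moment and not from an undefined first moment.

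I will reuse the constructions on $(0,1)$ with Lebesgue measure and $p_2\equiv 1$.

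\emph{Case $t>\tfrac12$.} Set $\gamma=1/(2t)\in(0,1)$ and take $p_1(x)=(1-\gamma)x^{-\gamma}$. Then $I(s)<\infty$ exactly when $\gamma s<1$. This gives $I(t)<\infty$, since $\gamma t=\tfrac12$, but $I(2t)=\infty$, since $\gamma\cdot 2t=1$. Therefore $\Var_{p_2}(W^t)=\infty$.

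\emph{Case $t<\tfrac12$.} Set $\gamma=1/(1-2t)>0$ and take $p_1(x)=(\gamma+1)x^{\gamma}$. Then $I(s)<\infty$ exactly when $\gamma s>-1$. This gives $I(2t-1)=\infty$, since $\gamma(2t-1)=-1$. Therefore $\E_{p_1}[W^{2(t-1)}]=\infty$ and $\Var_{p_1}(W^{t-1})=\infty$.

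In both cases $p_1$ and $p_2$ are strictly positive on $(0,1)$, so the pair is mutually absolutely continuous and is a valid IS pair.

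The only step needing care is checking $I(t)<\infty$ in the second case, where $t$ may be negative. There $\gamma t=t/(1-2t)$, and for $t<\tfrac12$ one has $t/(1-2t)>-1 \iff t>-(1-2t) \iff t<1$, which holds. The first case needs no such check, since $\gamma t=\tfrac12<1$ directly.

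The final sentence of the proposition is interpretive. Any diagnostic defined through $\E[W^{2t}]$ or $\Var(W^t)$ (or $W^{t-1}$ on the $p_1$ side) is undefined or infinite for these pairs. For contrast, Proposition~\ref{prop:halforder-ess} gives the finite value $1-\rho^2$ at $t=\tfrac12$ for every pair.
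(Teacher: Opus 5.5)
Your proposal is correct and matches the paper's proof, which simply applies Theorem~\ref{thm:minimax}(ii) to the Bayes factor $B=W$. You additionally unpack the same power-function constructions on $(0,1)$ and explicitly verify $\gamma t=t/(1-2t)>-1$ for negative $t$, a condition the paper's proof of Theorem~\ref{thm:minimax}(ii) asserts without showing the check.
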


\begin{proof}
Apply Theorem~\ref{thm:minimax}(ii) to the Bayes factor $B=W$.
\end{proof}

\paragraph*{Practical recipe: overlap-guided diagnostics and design.}
The above results suggest using the half-order overlap as a conservative anchor in regimes where tail mismatch is uncertain.
\begin{enumerate}
\item \emph{Report a half-order overlap estimate.}
Compute $\widehat\rho_{(2)}$ from proposal draws (or $\widehat\rho_{(1)}$ from target draws) when $W$ is available,
and otherwise use the two-sided product estimator implied by~\eqref{eq:rho2-product} when working with unnormalized ratios.
Because the corresponding summands have bounded variance, this diagnostic remains well-posed under worst-case tail mismatch.

\item \emph{Interpretation.}
Small $\widehat\rho$ (equivalently, large $\widehat{\CV}_{1/2}$) indicates severe weight concentration
and warns against reliance on tail-sensitive, higher-moment diagnostics.

\item \emph{Design of proposals and intermediate schedules.}
When a difficult reweighting problem is decomposed into a product of easier steps
(tempering/bridging/annealing), use pairwise half-order overlap as a tuning target:
estimate overlaps between consecutive states, and refine the grid (or adjust proposals) until adjacent overlaps
are not excessively small.  This directly aligns design with an overlap quantity that remains estimable
in worst-case tail regimes.
\end{enumerate}



\paragraph*{Simulation study: weight concentration away from $t=\tfrac12$ and stability anchored by the half-order overlap.}
We close this subsection with a simple numerical illustration of
Propositions~\ref{prop:halforder-ess}--\ref{prop:is-fragility}.
The goal is to visualize two points:
(i) power-weight diagnostics away from $t=\tfrac12$ can become ill-posed when the relevant second moments do not exist,
and (ii) the half-order overlap $\rho$ provides a stable, interpretable benchmark for weight concentration that remains
well-defined under worst-case tail mismatch.

\smallskip
\noindent\textit{Model pair.}
We use the one-parameter family on $(0,1)$ that also appears in the proof of
Theorem~\ref{thm:minimax}(ii): let $p_2(x)=1$ (uniform) and $p_1(x)=a x^{a-1}$ (a $\mathrm{Beta}(a,1)$ density),
so that the importance weight is
\[
W(x)\;=\;\frac{p_1(x)}{p_2(x)}\;=\;a x^{a-1}.
\]
In this family, the half-order overlap admits a closed form,
\[
\rho=\int_0^1 \sqrt{p_1(x)p_2(x)}\,dx=\frac{2\sqrt{a}}{a+1},
\qquad
\rho^2=\frac{4a}{(a+1)^2}.
\]
Proposition~\ref{prop:halforder-ess} implies that half-order summaries based on $\sqrt{W}$ (or $1/\sqrt{W}$)
remain stable and that $\rho^2$ plays the role of a deterministic overlap benchmark.

\smallskip
\noindent\textit{Weight transforms and concentration summaries.}
For a set of $N$ draws $\{X_i\}_{i=1}^N$ from a designated generating distribution, we compute unnormalized weights
$w_i$ and normalize them as $\tilde w_i = w_i/\sum_{j=1}^N w_j$.
We report three complementary summaries of concentration:
\[
C(p):=\sum_{i=1}^{\lceil pN\rceil}\tilde w_{(i)},\quad p\in(0,1],
\qquad
\kappa_N:=\frac{1}{N\sum_{i=1}^N \tilde w_i^2},
\qquad
S_{0.01}:=\sum_{i=1}^{\lceil 0.01N\rceil} \tilde w_{(i)},
\]
where $\tilde w_{(1)}\ge \cdots \ge \tilde w_{(N)}$ are the normalized weights in decreasing order.
Here $C(p)$ is a Lorenz-type concentration curve (uniform weights correspond to $C(p)=p$),
$\kappa_N$ is a normalized reciprocal squared-weight concentration index
(often reported in practice as $\mathrm{ESS}/N$), and $S_{0.01}$ measures the cumulative mass
carried by the top $1\%$ of weights.

\smallskip
\noindent\textit{Two stress tests away from $t=\tfrac12$.}
Figure~\ref{fig:is_halforder} summarizes the behavior over $R=500$ independent replicates at fixed $N=2000$.
In the proposal-side stress test (top row) we draw $X_i\sim p_2$ and compare the naive weights $w=W$ ($t=1$)
to half-order weights $w=W^{1/2}$. With the boundary choice $a=1/2$, we have $\Var_{p_2}(W)=+\infty$
(Proposition~\ref{prop:is-fragility}), and the resulting normalized weights exhibit extreme concentration.
In the target-side stress test (bottom row) we draw $X_i\sim p_1$ and compare a representative exponent below
half order (here $w=W^{t-1}$ with $t=1/4$) to the half-order reverse weights $w=W^{-1/2}$.
With the boundary choice $a=3$, we have $\Var_{p_1}(W^{t-1})=+\infty$ for this exponent, again producing
severe concentration.

\smallskip
\noindent\textit{Half-order stabilization and overlap benchmarks.}
In both stress tests, the half-order transforms yield mild concentration: the Lorenz curves lie close to the
uniform baseline, the top-share $S_{0.01}$ remains near $0.01$, and the squared-weight index $\kappa_N$ concentrates
near the deterministic benchmark $\rho^2=4a/(a+1)^2$.
This illustrates the design message of Section~\ref{subsec:is-degeneracy}:
diagnostics and tuning rules anchored at $t=\tfrac12$ (equivalently, based on $\sqrt{W}$ or $1/\sqrt{W}$)
remain well-posed under worst-case tail mismatch, whereas power-weight rules away from $t=\tfrac12$ can become
fragile or ill-defined even in mutually absolutely continuous settings.

\begin{figure}[!htbp]
  \centering
  \includegraphics[width=.98\linewidth]{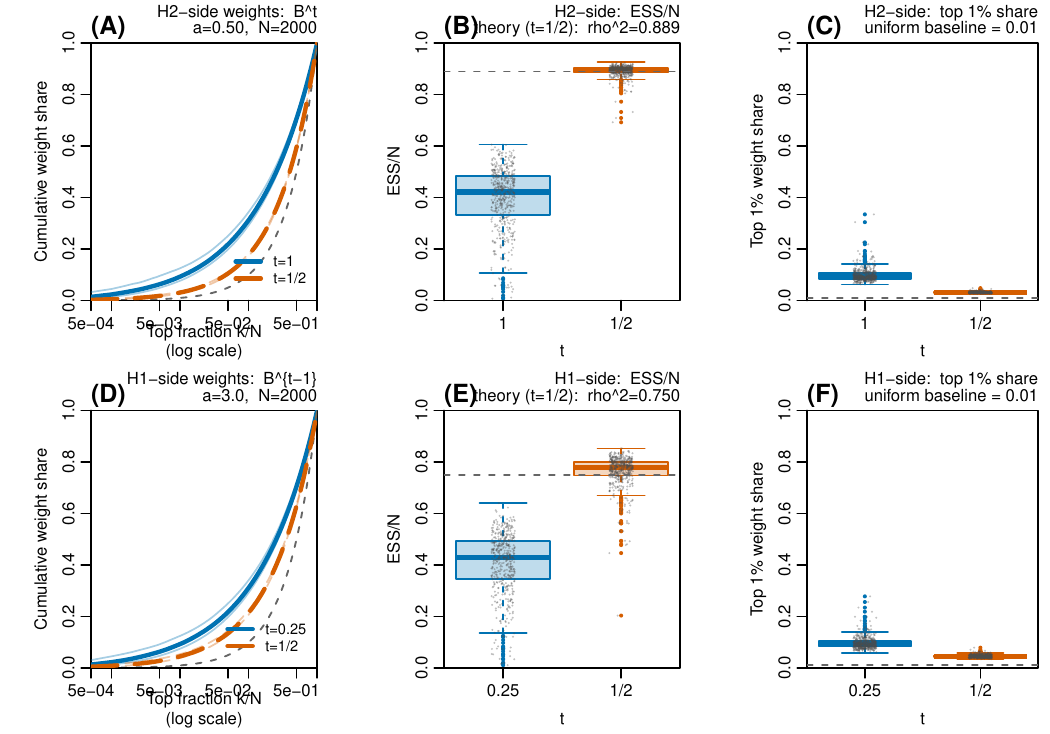}
  \caption{Importance-sampling weight concentration away from $t=\tfrac12$ and stability anchored by the half-order overlap
  in the $\mathrm{Beta}(a,1)$ vs.\ $\mathrm{Unif}(0,1)$ family ($R=500$ replicates, $N=2000$).
  Top row (proposal-side): $X_i\sim p_2$ with $a=1/2$, comparing weights $w=W^t$ for $t\in\{1,\tfrac12\}$.
  Bottom row (target-side): $X_i\sim p_1$ with $a=3$, comparing reverse weights $w=W^{t-1}$ for $t\in\{\tfrac14,\tfrac12\}$.
  Left panels show Lorenz-type concentration curves $C(p)=\sum_{i=1}^{\lceil pN\rceil}\tilde w_{(i)}$
  (median and 10/90\% envelopes across replicates) on a log-$p$ scale; the diagonal $C(p)=p$ corresponds to uniform weights.
  Middle panels show boxplots of the normalized reciprocal squared-weight concentration
  $\kappa_N := 1/(N\sum_{i=1}^N \tilde w_i^2)$ (commonly reported as $\mathrm{ESS}/N$), with dotted reference lines at the
  half-order overlap benchmark $\rho^2=4a/(a+1)^2$.
  Right panels show the cumulative weight share carried by the largest 1\% of weights ($S_{0.01}$);
  the dotted line is the uniform baseline $0.01$.
  In both stress tests, the half-order weights exhibit mild concentration and $\kappa_N$ close to $\rho^2$,
  whereas the tail-sensitive exponents yield substantial concentration and degraded squared-weight summaries.}
  \label{fig:is_halforder}
\end{figure}

\section{Discussion}
\label{sec:discussion}

This paper revisits the classical Turing--Good identities through a deliberately
computation-aware lens.  While the identities themselves are exact population equalities
that hold for a continuum of exponents, their usefulness as \emph{numerical} checks and as
building blocks for Monte Carlo estimators is governed by higher-order behavior of the
Bayes factor 
In particular, in
heavy-tailed regimes with weak overlap---precisely where algorithmic validation is most
needed---the integer-order moments traditionally used for diagnostics can have enormous or
even infinite variance, so that the check itself becomes ill-posed.  Our main message is
that the half-order point $t=1/2$ is not merely a convenient fractional choice but a
structurally singled-out exponent: it is the unique choice that yields two-sided,
distribution-free stability in second moments.

\paragraph*{Implications for Good checks in Bayesian workflows.}
A recurring challenge in Bayes factor computation is that correctness is difficult to
validate from a single realized dataset: the Bayes factor is a ratio of marginal
likelihoods, and numerical approximations can fail silently when tails are mismatched or
when the implementation inadvertently differs from the intended prior specification.
Good checks \citep{Sekulovski2024} address this by comparing empirical Monte Carlo moments
against exact identities that hold under prior predictive simulation.  Our results refine
the design principle for such checks.

The half-order viewpoint yields two practical advantages.  First, \emph{symmetry}: the
two-sided half-order check does not require a judgement about which model is ``true'' or
``more complex.''  Because $(p_1/p_2)^{-1/2}=(p_2/p_1)^{1/2}$ pointwise, the same
theoretical target $\rho$ is approached from either generating model, and the diagnostic
remains well-posed whichever direction is computationally convenient.  Second,
\emph{worst-case stability}: each summand in the check has variance at most~1, and the
balanced two-sample discrepancy has variance $4(1-\rho^2)/N\le 4/N$ under a fixed budget
of $N$ Bayes factor evaluations (Theorem~\ref{thm:unified}).  Moreover, in the small-overlap
regime $\rho\le 1/2$, the balanced two-sample half-order check can be more efficient than
the one-sided $t=1$ Turing check at matched cost (Theorem~\ref{thm:unified}(ii)).  This is
qualitatively different from one-sided integer-moment checks, which can have arbitrarily
large or infinite variance for valid (mutually absolutely continuous) model pairs.

From a workflow perspective, this suggests a conservative default: when one can simulate
from both prior predictives, start with a balanced two-sample half-order check and report
(i) the discrepancy $\widehat{\Delta}=\widehat{\rho}_2-\widehat{\rho}_1$ together with a
studentized standard error and (ii) the pooled overlap estimate
$\widehat{\rho}=(\widehat{\rho}_1+\widehat{\rho}_2)/2$ as an interpretable measure of
model overlap.  Large systematic discrepancies indicate implementation, prior, or
simulation mismatches, while a small $\widehat{\rho}$ signals an intrinsically hard
Bayes factor problem in which tail sensitivity and weight degeneracy are expected.

The appearance of $\rho$ is not incidental: it is simultaneously (i) the half-order
Hellinger overlap that always exists and lies in $(0,1]$ and (ii) the only model-dependent
quantity governing the bounded half-order variance.  Thus, $\rho$ plays a dual role in
practice.  As a diagnostic target, it is a quantity that can be estimated from either
side with uniformly controlled variance.  As a difficulty index, it quantifies how
strongly separated the marginal likelihoods are.  When $\rho$ is small, one should expect
heavy-tailed Bayes factors and unstable one-sided estimators based on raw weights; this
is precisely the regime in which the simulation studies show dramatic failures of
integer-moment checks.

More broadly, expressing algorithmic stability in terms of $\rho$ invites a useful shift
in how we communicate Bayes factor computations.  Rather than reporting only the Bayes
factor and an algorithm-specific Monte Carlo standard error, it can be informative to
also report an overlap statistic (e.g.\ $\widehat{\rho}$ or its complement
$1-\widehat{\rho}^2$) that is robustly estimable and that directly signals potential
weight degeneracy.  This complements established recommendations for principled Bayesian
workflows that emphasize diagnostics and validation alongside inference (e.g.,
\citealt{SchadEtAl2023}).

\paragraph*{Connections to ratio estimation and Monte Carlo design.}
Although we motivated the analysis via Good checks, the underlying structure is about
density ratios and normalizing-constant estimation.  Section~\ref{sec:broader-implications}
highlights that the same half-order midpoint $\sqrt{p_1 p_2}$ naturally produces stable
building blocks for ratio estimation (Propositions~\ref{prop:halforder-bridge} and
\ref{prop:halforder-bridge-stability}).  From the perspective of bridge sampling and
free-energy methods (Bennett 1976; Meng \& Wong 1996; Shirts \& Chodera 2008; Gronau
et al.\ 2017), the half-order identity may be viewed as a geometric ``baseline bridge''
whose components admit a distribution-free second-moment guarantee under mutual absolute
continuity.  Similarly, Section~\ref{subsec:is-degeneracy} frames importance-sampling
degeneracy diagnostics in terms of the same overlap family: diagnostics based on
$E[W^2]$ can be ill-posed in legitimate importance-sampling problems, whereas the
half-order transform yields an always-finite and interpretable overlap/degeneracy index
(Proposition~\ref{prop:halforder-ess}).  
Together, these connections suggest that half-order overlap can serve as a common
currency across Bayes factor computation, ratio estimation and proposal/bridge design. 

\paragraph*{Limitations and directions for future work.}
Our minimax stability statement is deliberately worst-case and focuses on second moments.
There are several natural extensions.
First, many practical Bayes factor computations rely on MCMC rather than i.i.d.\ draws
from $p_1$ and $p_2$.  While the half-order \emph{per-draw} variance bound remains
relevant, dependence inflates Monte Carlo error via autocorrelation, and a full analysis
should incorporate effective sample sizes and Markov-chain CLTs.  Developing half-order
Good-check calibration rules that remain reliable under MCMC dependence is an important
practical direction.
Second, although half order guarantees finite variance uniformly, it does not guarantee
small variance: when $\rho$ is extremely small, $1-\rho^2\approx 1$ and the two-sided
check may still require substantial Monte Carlo effort to achieve tight tolerances.  In
such regimes, the overlap estimate itself is useful as a warning signal that additional
algorithmic measures are needed (e.g., bridging
with intermediate
distributions).  A constructive design theory that uses estimated half-order overlaps to
adaptively place bridges (for example, choosing consecutive distributions to maintain a
roughly constant pairwise $\rho$) would operationalize the ideas in
Section~\ref{sec:broader-implications}.
Third, our analysis assumes mutual absolute continuity so that the Bayes factor is
well-defined almost everywhere.  In some applied settings (e.g., models with boundary
constraints, discrete--continuous mixtures, or hard truncations), this assumption may fail
or hold only approximately.  Extending half-order diagnostics to such settings---for
instance via truncation, regularization, or partial-overlap decompositions---would broaden
applicability.


\paragraph*{Concluding perspective.}
The Turing--Good identities provide a rare bridge between exact measure-theoretic
relationships and practical computational diagnostics.  The present work shows that this
bridge becomes particularly sturdy at the half-order point: $t=1/2$ is the unique exponent
for which the identity can be checked in a symmetric, two-sided, worst-case stable manner,
and the associated overlap $\rho$ emerges as a robust summary of both model similarity and
Monte Carlo difficulty.  We hope that framing Bayes factor validation and ratio-estimation
design around half-order overlap will help make Bayes factor workflows more reliable,
transparent, and reproducible in the regimes where they are most challenging.












  \bibliography{bibliography.bib}

\appendix




\end{document}

%% file: figs/Table1_1A_rev2.tex
\begin{table}[t]
\centering
\setlength{\tabcolsep}{4pt}
\begin{tabular}{lccccc}
\toprule
& $t=\tfrac12$ & \multicolumn{2}{c}{$t=1$} & \multicolumn{2}{c}{$t=2$} \\
\cmidrule(lr){2-2}\cmidrule(lr){3-4}\cmidrule(lr){5-6}
\, $n$ & $\widehat\Delta_{1/2}$ & $\widehat{\E}_{\Htwo}[B]-1$ & $\widehat{\E}_{\Hone}[B^{-1}]-1$ & $\widehat{\E}_{\Htwo}[B^{2}]-\widehat{\E}_{\Hone}[B]$ & $\widehat{\E}_{\Hone}[B^{-2}]-\widehat{\E}_{\Htwo}[B^{-1}]$ \\
\midrule
$10$ & $0.000\,(0.018)$ & $-0.001\,(0.095)$ & $0.000\,(0.022)$ & $-0.074\,(8.508)$ & $0.000\,(0.059)$ \\
$50$ & $0.000\,(0.023)$ & $-0.209\,(7.278)$ & $0.000\,(0.039)$ & $-8.84\mathrm{e}{+}11\,(9.62\mathrm{e}{+}10)$ & $-0.001\,(0.198)$ \\
$100$ & $0.000\,(0.023)$ & $-0.441\,(3.103)$ & $0.000\,(0.048)$ & $-2.51\mathrm{e}{+}26\,(3.91\mathrm{e}{+}25)$ & $0.000\,(0.340)$ \\
\bottomrule
\end{tabular}
\caption{Simulation study~1A (binomial example, correctly specified): Monte Carlo \emph{differences}. Entries are Mean (SD) over $R=10000$ repetitions, each using $m=2000$ draws per hypothesis.}
\label{tab:sim1_binom_correct}
\end{table}

%% file: figs/Table1_1B_rev2.tex
\begin{table}[t]
\centering
\setlength{\tabcolsep}{4pt}
\begin{tabular}{lccccc}
\toprule
& $t=\tfrac12$ & \multicolumn{2}{c}{$t=1$} & \multicolumn{2}{c}{$t=2$} \\
\cmidrule(lr){2-2}\cmidrule(lr){3-4}\cmidrule(lr){5-6}
\, $n$ & $\widehat\Delta_{1/2}$ & $\widehat{\E}_{\Htwo}[B]-1$ & $\widehat{\E}_{\Hone}[B^{-1}]-1$ & $\widehat{\E}_{\Htwo}[B^{2}]-\widehat{\E}_{\Hone}[B]$ & $\widehat{\E}_{\Hone}[B^{-2}]-\widehat{\E}_{\Htwo}[B^{-1}]$ \\
\midrule
$10$ & $-0.051\,(0.017)$ & $0.000\,(0.094)$ & $0.079\,(0.022)$ & $3.213\,(8.465)$ & $0.171\,(0.059)$ \\
$50$ & $-0.061\,(0.025)$ & $-0.109\,(10.188)$ & $0.108\,(0.041)$ & $-5.43\mathrm{e}{+}11\,(7.62\mathrm{e}{+}10)$ & $0.447\,(0.210)$ \\
$100$ & $-0.054\,(0.054)$ & $4.329\,(474.266)$ & $0.112\,(0.050)$ & $-1.35\mathrm{e}{+}26\,(2.86\mathrm{e}{+}25)$ & $0.644\,(0.358)$ \\
\bottomrule
\end{tabular}
\caption{Simulation study~1B (binomial example, correctly specified): Monte Carlo \emph{differences}. Entries are Mean (SD) over $R=10000$ repetitions, each using $m=2000$ draws per hypothesis.}
\label{tab:sim1_binom_mismatch}
\end{table}